\def\eps{\epsilon}%
\def\tensor{\,\raise2pt\hbox{${}_{\otimes}$}\,}
\def\fdg{\,:\,}
\def\ptl{\partial}
\def\rest#1{\raise-2pt\hbox{${\lfloor_{#1}}$}}
\def\mbo#1{\boldsymbol{#1}}
\def\olin#1{\overline{#1}{}}
\def\ulin#1{\underline{#1}{}}
\def\grad{{\nabla}}
\newcommand{\leftexp}[2]{{\vphantom{#2}}^{#1}{#2}}
\def\halb{\frac{1}{2}}
\def \a{\alpha}
\def \b {\beta}
\newtheorem{theorem}{Theorem}[section]
\newtheorem{lemma}[theorem]{Lemma}
\newtheorem{proposition}[theorem]{Proposition}
\newtheorem{corollary}[theorem]{Corollary}
\newtheorem{remark}[theorem]{Remark}
\newtheorem{definition}[theorem]{Definition}
\newcommand{\ba}{\begin{array}}
\newcommand{\ea}{\end{array}}
\newcommand{\bea}{\begin{eqnarray}}
\newcommand{\eea}{\end{eqnarray}}
\newcommand{\bee}{\begin{eqnarray*}}
\newcommand{\eee}{\end{eqnarray*}}
\renewcommand{\a}{\alpha}
\renewcommand{\b}{\beta}
\renewcommand{\r}{\rho}
\newcommand{\green}[1]{{\color{green}#1}}
\newcounter{mnotecount}[section]
\renewcommand{\themnotecount}{\thesection.\arabic{mnotecount}}
\newcounter{mymnotecount}[section]
\renewcommand{\themymnotecount}{\thesection.\arabic{mymnotecount}}
\newcommand{\mymnote}[1]{\protect{\stepcounter{mymnotecount}}${\raisebox{0.5\baselineskip}[0pt]{\makebox[0pt][c]{\color{green}{\tiny\em$\bullet$\themnotecount}}}}$\marginpar{\raggedright\tiny\em$\!\bullet$\themymnotecount:

\green{#1}}\ignorespaces}
\renewcommand{\mymnote}[1]{}
\begin{document}

 \title[Energy for Newman-Penrose-Maxwell Scalars]{A Conserved energy for Axially Symmetric Newman-Penrose-Maxwell Scalars on Kerr Black Holes}


\author{Nishanth Gudapati}
\address{Department of Mathematics, Yale University, 10 Hillhouse Avenue, New Haven, CT-06511, USA}
\email{nishanth.gudapati@yale.edu}

\subjclass[2010]{Primary: 83C50, 83C60}


\begin{abstract}
We show that there exists a 1-parameter family of positive-definite and conserved energy functionals for axially symmetric Newman-Penrose-Maxwell scalars on the maximal spacelike hypersurfaces in the exterior of Kerr black holes. It is also shown that the Poisson bracket within this 1-parameter family of energies vanishes on the maximal hypersurfaces.   
\end{abstract}

\maketitle

\section{Background and Introduction}
\noindent The Kerr metric $(\bar{M}, \bar{g})$ is a $2 -$parameter $(a, m)$ solution of the vacuum Einstein equations that represents massive, rotating black holes for $0< \vert a \vert  \leq m:$ 

\begin{align}\label{BL-Kerr}
\bar{g} =& - \left( \frac{\Delta - a^2 \sin^2 \theta}{\Sigma} \right) dt^2 - \frac{2a \sin^2 \theta (r^2 + a^2 -\Delta)}{\Sigma} dt d\phi \notag\\
&+ \left( \frac{(r^2 +a^2)^2 -\Delta a^2 \sin^2 \theta}{\Sigma}\right) \sin^2 \theta d\phi^2 + \frac{\Sigma}{\Delta} dr^2  + \Sigma d\theta^2
\end{align}
where,
\begin{subequations}
\begin{align}
\Sigma \fdg =&\, r^2 + a^2 \cos^2 \theta \\
\Delta \fdg =&\, r^2 -2mr + a^2, \quad \textnormal{with the real roots} \quad \{r_-,r_+\}
\intertext{and}
\theta \in [0, \pi],&\quad r \in (r_+, \infty),\quad \phi \in [0, 2\pi).
\end{align}

\end{subequations}
\noindent As it is evident from \eqref{BL-Kerr}, the Kerr metric admits two Killing vectors $\ptl_t$ and $\ptl_\phi.$
The problem of stability of the Kerr metric for perturbations within the class of vacuum Einstein equations is the subject of a long-standing research program in theoretical and mathematical general relativity.  Two of the important issues in the stability problem of Kerr black holes are
\begin{enumerate}
\item The lack of a \emph{positive-definite} and \emph{conserved} energy functional for the perturbations and the related superradiance effect (for $ a \neq 0$)
\item A \emph{gauge-invariant} characterization of stability. 
\end{enumerate}
In the case of  Maxwell (spin $\vert s \vert=1$) perturbations of Schwarzschild black holes (with $a=0$ in \eqref{BL-Kerr}), a positive-definite energy functional can be constructed from the energy-momentum tensor (see e.g., \cite{PB_08}). 
If one moves into the higher spin (gravitational) perturbations, even in the case of Schwarzschild black holes  $-$ which do not contain the ergo-region $-$ the construction of a positive-definite energy for the gravitational perturbations is not trivial. Using Hamiltonian methods and mode decomposition, a positive-definite energy functional for linear perturbations of the Schwarzschild black holes  was first constructed in the pioneering work of Moncrief \cite{Moncrief_74} for both even and odd parity perturbations (see also \cite{Moncrief_74_1, Moncrief_74_2, Moncrief_74_3}). In the recent complete proof of the linear stability of Schwarzschild black holes by Dafermos, Holzegel and Rodnianski \cite{HDR_16}, an important role is played by a positive-definite energy functional, which was constructed without the mode decomposition restriction (see also \cite{GH_16}). Subsequently, this energy functional was independently recovered by Prabhu-Wald \cite{PW_17_2}, by applying the methods of `canonical energy', previously constructed by Hollands-Wald \cite{WH_13}.  The linear stability based on the Cauchy problem for metric coefficients was established in \cite{HKW_16_1, HKW_16_2}. Likewise, the Morawetz estimate for  linearized gravity on Schwarzschild was established in \cite{ABW_17}, by extending the classic works \cite{Regge-Wheeler_57, Zerilli_70, Moncrief_74}.  

In the case of Kerr black holes with non vanishing angular momentum, the presence of the ergo-region causes significant difficulties in the construction of a positive-definite energy. Indeed, at the outset, it is the ergo-region and the lack of positivity of energy that results in phenomena such as the Penrose process, irreducible mass \cite{DC_70} and superradiance \cite{AAS_73, FKSY_08}. Furthermore, from a PDE perspective, the lack of a positive-definite energy poses considerable obstacles in proving asymptotic boundedness and decay of perturbations. 

A usual technique to overcome this issue is to construct a positive-definite energy functional from a linear combination of the $\ptl_t$ and $\ptl_\phi$ vector fields. However, since this energy is not necessarily conserved, a separate Morawetz or spacetime integral estimate is needed to control this energy in time. Along these lines, a variety of powerful techniques are used to prove uniform boundedness and decay of spin $s=0, 1, 2$ fields on Kerr for `small' or `very small' angular momentum \cite{LB_15_1, LB_15_2, DR_11, Tato_11, SMa_17_1, SMa_17_2, HDR_17}. Mode stability of Kerr black holes was established in the celebrated work of Whiting \cite{Whit_89}, which was recently extended in \cite{AMPW_16} to the real axis. Using spectral methods\cite{FKSY_05}, the decay of linear wave equation for fixed azimuthal modes was established in \cite{FKSY_06,
FKSY_08, FKSY_08_E} for large $\vert a \vert < m$. The decay for a general linear wave equation for large $\vert a \vert <m$ was established in \cite{DRS_16}. However, relatively little is known about the global behaviour of higher spin $(\vert s \vert=1, 2)$ fields for large $\vert a \vert.$

The special case of an axially symmetric linear wave equation admits a positive-definite energy and energy density (for $\vert  a \vert<m$) directly from the energy-momentum tensor. However, this simplification does not carry forward to Maxwell or gravitational perturbations, where counter examples for positivity of energy density can be constructed (see the discussion in Section 2 of \cite{GM17}). 
Based on the Brill mass formula for axially symmetric initial data \cite{D09}, a positive-definite energy functional for perturbations of extremal Kerr black holes was first constructed in \cite{DA_14}. Subsequently, using Hamiltonian methods, a positive-definite energy functional was constructed in \cite{GM17} for Einstein-Maxwell perturbations of Kerr-Newman black holes for the full subextremal range ($\vert a \vert, \vert Q \vert <m$).  
 A detailed discussion of the evolution of methods can be found therein. 
      

Although the Einstein's equations themselves are diffeomorphism invariant, the fact that the choice of gauge for the perturbations of the metric is not unique causes many problems in the perturbative theory.  Therefore, the characterization of perturbations of Kerr  in terms of (locally) gauge-invariant variables is crucial. 

Taking advantage of the special algebraic properties of Kerr black holes, the gauge-invariant quantities are constructed and studied in detail in several classic works. These results are summarized and streamlined in the much revered monograph of Chandrasekhar\cite{Chandrasekhar_83}. We refer the reader to this work for a detailed development of the subject. Recently, the (minimal) complete set of local gauge-invariant  perturbative quantities of Kerr black holes was obtained in \cite{AB_18}. The adjoint operators that relate the Teukolsky variables to the symmetry operators of both Maxwell and linearized gravity of Kerr are discussed in \cite{AB_17}, which builds on \cite{W_78}. In this context, it may be noted that the ergo-region, the lack of positivity of energy and superradiance also affect the dynamics of these gauge-invariant variables.

The aim of this work is to reconcile the positive-definite energy constructed in \cite{GM17} with the issue 2). In particular, we shall construct a positive-definite and conserved energy functional for the Newman-Penrose-Maxwell scalars. We would like to remark that this energy offers a significant `short cut' in the analysis of stability, in that it bypasses the need for the technical Morawetz or  spacetime integral estimates to control a positive-definite energy in time. Furthermore, the fact that the fundamental energy is of `$\Vert \cdot \Vert_{L^2}$ type' in terms of the Maxwell scalars $\Phi_0, \Phi_1, \Phi_2$  is particularly convenient in proving the explicit decay rates of the fields. The problem of establishing decay rates of perturbations using the positive-definite energy functionals is being pursued in a separate series of works. 


In this work we shall restrict to the pure Maxwell case and the case of gravitational (Einstein) perturbations of Kerr, which is a bit more technical, shall be considered in a subsequent article. Actually, the Maxwell perturbations on Kerr black holes are directly 
diffeomorphism invariant and in the case of axial symmetry, also electromagnetic-gauge invariant. Nevertheless, in view of the similarity in the structure of the Newman-Penrose scalars for Maxwell and gravitational perturbations, the motivation for the current work is that it shall serve as a prelude to the gravitational case. 

In the current article, we shall use the results of a forthcoming article \cite{GM17} for a few peripheral aspects, but the main results hold independently and are built from the foundations.  
Suppose $\ell$ and $n$ are two null vectors of $(\bar{M}, \bar{g})$ such that $\ell(n) =-1$ and let $\mbo{e_x}$ and $\mbo{e_y}$ be two (unit) orthonormal spacelike vectors, then define 
\begin{align}
m \fdg = \frac{1}{\sqrt{2}} (\mbo{e_x} + i \mbo{e_y}) \quad m^* \fdg = \frac{1}{\sqrt{2}} (\mbo{e_x} - i \mbo{e_y}).
\end{align} 
For concreteness and convenience, let us choose the Kennersly frame for the tetrad $(\ell, n, m, m^*)$ in the Newman-Penrose formalism: 
\begin{subequations}
\begin{align}
\ell \fdg =& \frac{1}{\Delta} ( (r^2 +a^2)\ptl_t + \Delta \ptl_r + a \ptl_\phi),  \\
n \fdg=& \frac{1}{2\Sigma} ((r^2+a^2)\ptl_t -\Delta \ptl_r+ a \ptl_\phi ) ,  \\
m \fdg=&\frac{1}{\olin{\Sigma} \sqrt{2}} ( ia \sin \theta \ptl_t + \ptl_\theta + i \csc  \theta \ptl_\phi ), \\
m^* \fdg =& \frac{1}{\olin{\Sigma}^* \sqrt{2}} (-ia \sin \theta \ptl_t + \ptl_\theta -i \csc  \theta \ptl_\phi ),
\end{align} 
\end{subequations}
represented in the Boyer-Lindquist coordinates $(t, r, \theta, \phi),$
where 
\begin{align}
\olin{\Sigma} \fdg = r + i a \cos \theta, \quad \olin{\Sigma}^* \fdg = r - ia \cos \theta,
\end{align}
so that, we have
\begin{align}
\bar{g}^{\mu \nu} =  -\ell^\mu n^\nu - n^\mu \ell^\nu + 
m^\mu m^{*\nu} + m^{*\mu} m^\nu \\
\ell(n) = n(\ell) =-1, \quad\text{and}\quad m^* (m)=m (m^*) =1 \label{tetrad-sign-conv},
\end{align}
and 
\[
n(n) = \ell (\ell)=0, \quad \mbo{e}_T \fdg = \frac{1}{\sqrt{2}} (\ell + n),
\]
a unit timelike vector.  In this work, we shall be interested in the Maxwell fields, governed by the Faraday tensor $F$ which is the critical point of the following functional 
\begin{align} \label{F-var}
S_{M} \fdg = -\frac{1}{4}\int \big\Vert F \big\Vert^2_{\bar{g}} \bar{\mu}_{\bar{g}}
\end{align}
for compactly supported variations of the vector potential $A$, where $F =\fdg dA.$ As a consequence, 
we also have the Bianchi identities: 

\begin{align}\label{Bianchi-orig}
\bar{\grad}_{[\gamma} F_{\mu \nu]}  =0, \quad \mu, \nu, \gamma= 0, 1 \cdots 3 \quad \textnormal{(Bianchi Identities)}
\end{align}


\noindent $\bar{\grad}$ is the covariant derivative of $(\bar{M}, \bar{g}).$ The variational principle of \eqref{F-var} results in the Maxwell field equations
\begin{align}
\bar{\grad}^\mu F_{\mu \nu} =0, \quad \textnormal{on}\quad (\bar{M}, \bar{g}), \quad \mu, \nu =0, 1, \cdots 3. 
\end{align}
The variational principle \eqref{F-var} also results in the stress-energy tensor 
\begin{align}
T_{\mu \nu} \fdg =& \frac{\ptl S_M}{\ptl g_{\mu \nu}} - \bar{g}_{\mu \nu} S_{M} \\
 =& F_{\mu \a} F^{\a}_ {\nu} - \frac{1}{4} \bar{g}_{\mu \nu} F_{\a \b} F^{\a \b}
\end{align}
which is $(\bar{M}, \bar{g})-$ divergence and trace-free, as it is well known.  
Let us now define the Newman-Penrose-Maxwell scalars in the $(n, \ell, m, \bar{m})$ tetrad as follows:
\begin{subequations}  
\begin{align}
\Phi_0 \fdg=& F_{\mu \nu} \ell^\mu m^\nu \\
\Phi_1 \fdg =& \halb F_{\mu \nu} (\ell^\mu n^\nu + m^{*\mu} m^\nu) \\
\Phi_2 \fdg =& F_{\mu \nu}  m^{*\mu} n ^\nu.
\end{align}
\end{subequations}
In general, on a globally hyperbolic, asymptotically flat manifold there are significant advantages in studying the dynamics of the Maxwell tensor $F$ using the Maxwell scalars $\Phi_0, \Phi_1, \Phi_2.$ Firstly, due to their analogous structure to the Weyl scalars, historically, the Maxwell scalars are considered to be a suitable `testing ground' to study gravitational problems. Secondly, the qualitative behaviour of the $F$ tensor is neatly separated in Maxwell scalars: $\Phi_0$, $\Phi_2$ encode
the `radiative' properties and $\Phi_1$  encodes the `Coulombic' properties of the Maxwell tensor $F$. 
Now consider a 3+1 decomposition of the Kerr metric such that $(\bar{M}, \bar{g}) = \mathbb{R} \times \olin{\Sigma},$

\begin{align}
\bar{g} = -\bar{N}^2 dt^2 + \bar{q}_{ij} (dx^i + \bar{N}^i dt) \otimes (dx^j + \bar{N}^i dt),
\end{align}
where $\bar{q}$ is the (Riemannian) metric of $\olin{\Sigma}.$
Upon a Legendre transformation of the Lagrangian action \eqref{F-var}, we get an ADM variational principle in the Hamiltonian framework,

\begin{align}
I_{ADM} \fdg = \int (A_i \ptl_t \mathfrak{E}^i - N H - N^i H_i)d ^4 x
\end{align}
for the phase space $X^{\text{Max}} \fdg = \{ (A_i, \mathfrak{E}^i), i = 1, 2, 3 \},$ where 
\begin{align}
H \fdg =& \halb \bar{\mu}^{-1}_{\bar{q}} \bar{q}_{ij} (\mathfrak{E}^i \mathfrak{E}^j + \mathfrak{B}^i \mathfrak{B}^j),
\\
H_i \fdg =& - \eps_{ijk} \mathfrak{E}^j \mathfrak{B}^k, \\
\mathfrak{B}^{i} \fdg =& \halb  \eps^{ijk} (\ptl_j A_k - \ptl_k A_j).
\end{align}

As  we already remarked, the Kerr metric $(\bar{M}, \bar{g})$ is axially symmetric with the vector $\ptl_\phi$ as the Killing field that generates the $SO(2)$ action on $(\olin{\mbo{\Sigma}}, \bar{q})$. We construct the quotient $\mbo{\Sigma}$ such that $\mbo{\Sigma} \fdg = \olin{\mbo{\Sigma}}/SO(2)$ and we denote the fixed point set of the $SO(2)$ action with $\Gamma.$ It may be noted that 
$\bar{g} (\ptl_\phi, \ptl_\phi) \equiv 0$ on $\Gamma$. Finally, define $M$ such that 
$M \fdg =\mbo\Sigma \times \mathbb{R} = \bar{M}/SO(2).$ With the above notation, define the metric $g$ on $M$ such that
\begin{align}\label{WP-Kerr}
\bar{g} = e^{-2\gamma} g + e^{2\gamma} (d\phi + \mathcal{A}_\nu dx^\nu)^2, \quad \textnormal{(Weyl-Papapetrou form)}
\end{align}
in a suitably aligned coordinate system, where $ e^{2\gamma} \fdg = \bar{g}(\ptl_\phi ,\ptl_\phi)$ and $g, \gamma, A_\nu$ are independent of $\phi.$ In explicit terms, the Kerr metric \eqref{BL-Kerr} can be represented
in the Weyl-Papapetrou form \eqref{WP-Kerr} as follows (cf. Appendix A in \cite{GM17}):  
\begin{align}
\bar{g} =& \left( \frac{\Sigma}{ (r^2+a^2)^2-a^2 \Delta \sin^2 \theta} \right) (-\Delta dt^2 + R^{-2}((r^2+a^2)^2-a^2 \Delta \sin^2 \theta) (d\r^2 + dz^2)) \notag\\
&+ \Sigma^{-1} \sin^2 \theta ((r^2+a^2)^2-a^2 \Delta \sin^2\theta ) \left(d\phi - \frac{2amr}{(r^2+a^2)^2-a^2 \Delta \sin^2 \theta}dt \right)^2,
\end{align}
where $\displaystyle R \fdg = \halb (r-m + \Delta^{1/2}), \r \fdg= R \sin \theta, z\fdg= R \cos \theta.$ 
Under the above assumptions and away from the axes $\Gamma$,  the Kerr metric satisfies the wave map equations, $L_1 =0, L_2=0$ with
\begin{align}
L_1 \fdg =& e^{2\gamma} (2(\ptl_b(N \bar{\mu}_q q^{ab} \ptl_a \gamma) + N e^{-4\gamma} \bar{\mu}_q q^{ab} \ptl_a \omega \ptl_b \omega) \\
L_2 \fdg=& - \ptl_b (N \bar{\mu}_q q^{ab} e^{-4\gamma} \ptl_a \omega)
\end{align}
upon the standard dimensional reduction procedure, where $\omega$ is the (gravitational) twist potential such that, 
\begin{align}\label{grav-twist}
\ptl_a \mathcal{A}_0 + N e^{-4\gamma} \eps_{ab} \bar{\mu}_q q^{bc} \ptl_c \omega=0.
\end{align}
 $N, q^{ab}, \bar{\mu}_q$ are such that, upon the ADM decomposition of $(M,g) = (\mbo{\Sigma}, q) \times \mathbb{R}$
\begin{align}\label{2+1adm}
g = -N^2 dt^2 + q_{ab} (dx^a + N^a dt) \otimes (dx^b + N^b dt),
\end{align}
$N$ is the lapse in \eqref{2+1adm} and $\bar{\mu}_q$ is the square root of the determinant of the metric $q_{ab}$ of $\mbo{\Sigma}.$ In this work we shall be interested in the Maxwell tensor $F$ such that it is derived from an axially symmetric $A$. In axial symmetry, we define the twist potentials $\eta, \lambda$ as follows $\lambda \fdg = A_\phi$ and from the Gauss constraint:

\begin{align} \label{eta-twist}
\mathfrak{E}^a =\fdg \eps^{ab}\ptl_b \eta, \quad (\mbo{\Sigma}, q)
\end{align}
The existence of $\eta \fdg (\mbo{\Sigma}, q) \to \mathbb{R}$ is ensured by Poincar\`e Lemma on $(\mbo{\Sigma}, q).$ We would like to emphasize that even though the Kerr manifold has non-trivial second (de Rham) cohomolgy class in the 3+1 dimensional sense, there is no need to impose a global condition for the Poincar\`e Lemma used in \eqref{eta-twist}.   This is due to the special feature of our axisymmetric problem that the quotient $(\mbo{\Sigma}, q)$ is itself a simply connected (topologically trivial) manifold, where the first cohomology class is indeed trivial.  This aspect manifests itself in several contexts in our problem. 
Equally importantly, we would like to remark that, even though we have defined the quantity $\eta$ on $\Sigma$ in the above,  it lifts up  smoothly and globally to (the Lorentzian) $(M, g)$ and transforms as a \emph{spacetime} scalar (cf. Appendix D in \cite{GM17}). Let us define $u$ and $v$ such that 
$\displaystyle u \fdg= \mathfrak{B}^{\phi}, v\fdg= -\mathfrak{E}^\phi$ so that we form the phase space 
\begin{align}
X = \fdg \{ (\lambda, v), (\eta, u) \}.
\end{align}
For convenience, let us choose $\eta=\lambda=0$ on $\Gamma$. It follows from standard arguments that global regularity holds for the initial value problem of Maxwell's equations in the domain of outer communications of Kerr black holes. As a consequence, we have $\ptl_{\mbo{n}} \lambda =\ptl_{\mbo{n}} \eta =0$ and $u=v=0$ on $\Gamma,$ where $\ptl_{\mbo{n}}$ is the derivative normal to $\Gamma.$
One approach to infer the spatial decay rate of $\eta$ from the decay rate of $\mathfrak{E}$ is shown below. It follows from the global regularity and the conditions on the axes $\Gamma$ and the horizon $\mathcal{H}^+$ that the components of $\mathfrak{E}$ admit the decomposition: 

\begin{align}
\mathfrak{E}^{\theta} = \sum^{\infty}_{n=0} \mathfrak{E}^{\theta}_{n} \cos n \theta, \quad
\mathfrak{E}^{r}= \sum^{\infty}_{n=0} \mathfrak{E}_n^{r} \sin n \theta.
\end{align}
 We have from the Gauss constraint equation, $ \ptl_r \mathfrak{E}_n^{r} - n \mathfrak{E}_n^{\theta}=0.$ The decay rate of $\eta$ can now be inferred from the equation $\ptl_a \eta = \eps_{ab} \mathfrak{E}^b$. In particular, it follows that if $\mathfrak{E}$ is compactly supported, then $(\eta, v)$ also vanish outside the support of $\mathfrak{E}$, which in turn implies the finite propagation speed of $(\eta, u).$ A similar argument applies for $(\lambda, v).$ The dynamical  field equations in $X$ can be locally represented as follows: 

\begin{subequations}\label{twist-dynamics}
\begin{align}
\ptl_t \eta =& N e^{2\gamma} \bar{\mu}^{-1}_q u,\quad 
\ptl_t u = \ptl_b (N \bar{\mu}_q q^{ab} e^{-2\gamma} \ptl_a \eta) - N \bar{\mu}_qq^{ab} e^{-4\gamma} \ptl_a \omega \ptl_b \lambda, \\
\ptl_t \lambda =& N e^{2\gamma} \bar{\mu}^{-1}_q v, \quad \ptl_t v = \ptl_b (N \bar{\mu}_q q^{ab} e^{-2\gamma} \ptl_a \lambda) + N \bar{\mu}_qq^{ab} e^{-4\gamma} \ptl_a \omega \ptl_b \eta.
\end{align}
\end{subequations}

It is well known that the Hamiltonian energy density in the phase space $X^{\text{Max}}$ has indefinite sign. In  Section 2 in \cite{GM17} it is shown that,  the Hamiltonian energy 

\begin{align}\label{Ham-orig}
H \fdg =& \int_{\mbo{\Sigma}} \Big(\halb N e^{2\gamma} \bar{\mu}^{-1}_q (u^2 + v^2) + \halb N \bar{\mu}_q q^{ab} e^{-2\gamma} (\ptl_a \eta \ptl_b \eta + \ptl_a \lambda \ptl_b \lambda) \notag\\
&\quad + N e^{-4\gamma} \bar{\mu}_q q^{ab} \ptl_a \omega \ptl_b \eta \lambda \Big) d^2x,
\end{align}
 using the transformations adapted from the Robinson's identity \cite{Rob_74},  can be transformed into a positive-definite, regularized Hamiltonian energy functional

\begin{align}\label{H-Reg-def}
H^{\text{Reg}} \fdg =& \int_{\mbo{\Sigma}}  \Big(\halb N \bar{\mu}_q^{-1} (\ulin{u}^2 + \ulin{v}^2) 
+ \halb N \bar{\mu}_q q^{ab} ( \ptl_a \gamma \ptl_b \gamma +  \frac{1}{4} e^{-4\gamma} \ptl_a \omega \ptl_b \omega) (\ulin{\lambda}^2 + \ulin{\eta}^2)  \notag\\
&+\halb N \bar{\mu}_q q^{ab} ( (\ptl_a \ulin{\lambda} - \halb \ulin{\eta} e^{-2\gamma} \ptl_a \omega) (\ptl_b \ulin{\lambda} - \halb \ulin{\eta} e^{-2\gamma} \ptl_b \omega) \notag\\
&+ (\ptl_a \ulin{\eta} + \halb \ulin{\lambda} e^{-2\gamma} \ptl_a \omega)(\ptl_b \ulin{\eta} + \halb \ulin{\lambda} e^{-2\gamma} \ptl_b \omega)) \Big) d^2 x,\notag\\
\end{align}
represented in the regularized phase space $\ulin{X} \fdg = \{ (\ulin{\lambda}, v), (\ulin{\eta}, \ulin{u}) \},$ where 
\begin{align}
\ulin{\lambda} \fdg = e^{-\gamma} \lambda, \quad \ulin{\eta} \fdg = e^{-\gamma} \eta, \quad \ulin{v} \fdg= e^{\gamma} v, \quad \ulin{u}
\fdg = e^{\gamma} u
\end{align}

\noindent such that $H^{\text{Reg}}$ is a Hamiltonian for $\ulin{X}$ i.e., 

\begin{subequations} \label{Ham-Reg}
\begin{align} 
D_{\ulin{\lambda}} \cdot H^{\text{Reg}}= - \ptl_t \ulin{v},& \quad D_{\ulin{\eta}} \cdot  H^{\text{Reg}}= - \ptl_t \ulin{u}, \\
 D_{\ulin{v}} \cdot H^{\text{Reg}} = \ptl_t \ulin{\lambda},& \quad D_{\ulin{u}} \cdot H^{\text{Reg}}= -\ptl_t \ulin{\eta}.
\end{align}
\end{subequations}


\noindent Furthermore, the aforementioned Hamiltonian $H^{\text{Reg}}$ has been used to construct a divergence-free vector field density: 
\begin{align}
(J^{\text{Reg}})^0 \fdg =&\, \mathbf{e}^{\text{Reg}} \\
(J^{\text{Reg}})^b \fdg=& - \Big(N^2 q^{ab} \ulin{u} (\ptl_a \ulin{\eta} + \halb \ulin{\lambda} e^{-2\gamma} e^{-2\gamma} \ptl_a \omega) + N^2 q^{ab} \ulin{v} (\ptl_a \ulin{\lambda}-\halb \ulin{\eta} e^{-2\gamma} \ptl_a \omega)  \Big),
\end{align}

\noindent where $\mathbf{e}^{\text{Reg}}$ is the energy density i.e., $H^{\text{Reg}} = \fdg \int_{\mbo{\Sigma}}\mathbf{e}^{\text{Reg}} \, d^2x.$
The divergence-free vector field density $J^{\text{Reg}}$ has additional information than \eqref{Ham-Reg} in that it can be used to relate the boundary fluxes through any region using the Stokes theorem. These results were later extended to the Maxwell  equations on Kerr-de Sitter in \cite{NG_17_2_ar}. In this case, the Hamiltonian contains an additional term (cf. eq (32) in \cite{NG_17_2_ar}) involving the cosmological constant $\Lambda,$ but it nevertheless generates the flow of the original Hamiltonian equations \eqref{twist-dynamics}. This is due to the special internal coupling in the equations. 

Separately, in \cite{PW_17} a 1-parameter family of energy functionals was constructed for axially symmetric Maxwell's equations on Kerr black holes. In the following, we shall reconcile their results with the Robinson's identity and also show that the energy functionals form a 1-parameter family of Hamiltonians for the dynamics in the phase space $X,$ which also shows that the Poisson bracket for different values of the parameter vanishes. 

It may be noted that the expression \eqref{H-Reg-def} is not symmetric with respect to a permutation in the phase space $X$ (or $\ulin{X}$).
If we consider an alternative form of the original Hamiltonian energy:
\begin{align}\label{Ham-orig'}
H^{\text{Alt}'}  \fdg=  \int_{\mbo{\Sigma}}& \Big( \halb N e^{2\gamma} \bar{\mu}^{-1}_q (u^2 + v^2) + \halb N \bar{\mu}_q q^{ab} e^{-2\gamma} (\ptl_a \eta \ptl_b \eta + \ptl_a \lambda \ptl_ b\lambda)\notag\\ 
&- N e^{-4\gamma} \bar{\mu}_q q^{ab} \ptl_a \omega \ptl_b \lambda \eta
\Big)d^2x
\end{align}
a modified form of the original Robinson's identity applies: 
\begin{align}\label{Rob-modified}
&\halb N e^{-2\gamma} \bar{\mu}_a q^{ab} (\ptl_a \lambda \ptl_b \lambda + \ptl_a \eta \ptl_b\eta)
- N \bar{\mu}_q q^{ab} e^{-4\gamma} \ptl_a \omega \ptl_b \lambda \eta+ \frac{1}{4}L_1 (\lambda^2 + \eta^2)+\halb L_2 \lambda \eta \notag\\
&+\halb \ptl_b \left( N \bar{\mu}_q q^{ab} e^{-4\gamma} \ptl_a \omega \eta \lambda - N \bar{\mu}_q q^{ab} e^{-2\gamma}\ptl_a \gamma (\eta^2 + \lambda^2) \right)   \notag\\
=& \frac{1}{4} N e^{-2\gamma} \bar{\mu} q^{ab} (( \ptl_a \eta +\lambda e^{-2\gamma} \ptl_a \omega) (\ptl_b \eta +\lambda e^{-2\gamma} \ptl_b \omega) + (\ptl_a \lambda - \eta e^{-2\gamma} \ptl_a\omega) ( \ptl_b \lambda - \eta e^{-2\gamma} \ptl_b \omega))  \notag\\
&+ \frac{1}{4} Ne^{-2\gamma} \bar{\mu}_a q^{ab} (( \ptl_a \lambda -2\lambda \ptl_a \gamma)( \ptl_b \lambda-2\lambda \ptl_b \gamma) + ( \ptl_a\eta -2\eta \ptl_a \gamma)( \ptl_b \eta -2\eta \ptl_b \gamma)). \notag\\
\end{align}
It may be noted that, in view of the fact that these modifications occur only in the background and divergence terms, the final energy expression remains the same as in \eqref{H-Reg-def}. However, we shall use this modification together with the original Robinson's identity to obtain a 1-parameter family of generalized Robinson's identities, which results in an energy expression that is more symmetric upon a permutation in the phase space $X$. In the process we shall recover the energy expression obtained in \cite{PW_17}.  
\begin{corollary}
Suppose $F$ is compactly supported and axially symmetric (with $\mathcal{L}_{\phi} A \equiv 0$), with smooth initial data, then the following statements hold for the initial value problem of $F$ in $(\bar{M}, \bar{g})$ with $\vert a \vert < m$:

\begin{enumerate}
\item There exists a 1-parameter family of positive-definite Hamiltonian functionals $H^{Alt}_{S} (s), s \in [0, 1]$ in the phase-space $X$, in particular, 
\begin{align}
\Big \{  H^{Alt}_{S} (s), H^{Alt}_{S} (\tau) \Big \} \equiv 0  
\end{align}

\noindent where $H^{Alt}_{S} (s)$ and $H^{Alt}_{S} (\tau)$  are such that $s \neq \tau $ with $ s, \tau \in [0, 1] $ and $\displaystyle \{ \cdot, \cdot\}$ is the Poisson bracket in the phase space $X.$

\item There exists a 1-parameter family of (spacetime) divergence-free vector field densities $J_S(s), s \in [0, 1]$ such that its flux through
$t-$constant hypersurfaces is positive-definite. 
\end{enumerate}
\end{corollary}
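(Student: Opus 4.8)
The plan is to interpolate between the original Robinson identity that underlies $H^{\text{Reg}}$ in \eqref{H-Reg-def} and its modified counterpart \eqref{Rob-modified}. Concretely, for each $s \in [0,1]$ I would form the $s$-weighted combination in which the cross term $N e^{-4\gamma}\bar{\mu}_q q^{ab}\ptl_a\omega\,\ptl_b\eta\,\lambda$ carries weight $s$ and $N e^{-4\gamma}\bar{\mu}_q q^{ab}\ptl_a\omega\,\ptl_b\lambda\,\eta$ carries weight $(1-s)$, splitting the background regularizers $\tfrac14 L_1(\lambda^2+\eta^2)$ and $\halb L_2\lambda\eta$ compatibly. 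Completing the square then produces a 1-parameter family of generalized Robinson identities whose right-hand side is a sum of the two manifestly non-negative square terms $(\ptl_a\eta - 2(1-s)\eta\ptl_a\gamma + s\lambda e^{-2\gamma}\ptl_a\omega)(\ptl_b\eta-\cdots)$ and $(\ptl_a\lambda - 2s\lambda\ptl_a\gamma - (1-s)\eta e^{-2\gamma}\ptl_a\omega)(\ptl_b\lambda-\cdots)$, together with the residual term $2Ns(1-s)\,\bar{\mu}_q q^{ab}e^{-2\gamma}(\ptl_a\gamma\ptl_b\gamma + \tfrac14 e^{-4\gamma}\ptl_a\omega\ptl_b\omega)(\lambda^2+\eta^2)$. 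Since $q^{ab}$ is Riemannian and $s(1-s)\ge 0$ exactly on $[0,1]$, this residual is pointwise non-negative; integrating over $\mbo{\Sigma}$ therefore defines the positive-definite functional $H^{\text{Alt}}_S(s)$ with non-negative density $\mathbf{e}^{\text{Alt}}_S(s)$, which reproduces the symmetric energy of \cite{PW_17}.

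Next I would establish that, for every $s$, the functional $H^{\text{Alt}}_S(s)$ is a Hamiltonian generating the dynamics \eqref{twist-dynamics}, exactly as \eqref{Ham-Reg} does for $H^{\text{Reg}}$. The essential observation is that the density built above differs from the original Hamiltonian density \eqref{Ham-orig} (equivalently \eqref{Ham-orig'}) only by two kinds of terms: multiples of the wave-map operators $L_1$ and $L_2$, which vanish identically on the Kerr background because Kerr solves $L_1 = L_2 = 0$ away from $\Gamma$; and spatial total divergences $\ptl_b(N\bar{\mu}_q q^{ab}(\,\cdots))$, whose integrals reduce to boundary fluxes that vanish for the compactly supported data under consideration, and on $\Gamma$ and $\mathcal{H}^+$ by the regularity conditions $\eta=\lambda=u=v=0$. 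Consequently $H^{\text{Alt}}_S(s)$ coincides, as a functional on solutions, with the single conserved physical energy for all $s$, and its Hamiltonian vector field (its variational derivatives $D_\lambda, D_\eta, D_u, D_v$) is independent of $s$, so Hamilton's equations \eqref{twist-dynamics} hold with $H^{\text{Alt}}_S(s)$ as generator for each parameter value.

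Part (1) then follows: since every $H^{\text{Alt}}_S(s)$ generates the \emph{same} physical time evolution \eqref{twist-dynamics}, the bracket $\{H^{\text{Alt}}_S(s), H^{\text{Alt}}_S(\tau)\}$ equals the rate of change of $H^{\text{Alt}}_S(\tau)$ along the flow generated by $H^{\text{Alt}}_S(s)$, i.e. along the physical flow. As $H^{\text{Alt}}_S(\tau)$ is itself conserved under that flow (being the physical energy of an autonomous system up to terms that vanish on solutions), this rate is zero, giving $\{H^{\text{Alt}}_S(s), H^{\text{Alt}}_S(\tau)\}\equiv 0$ for all $s,\tau\in[0,1]$. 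For part (2) I would mirror the construction of $J^{\text{Reg}}$: set $(J_S(s))^0 \fdg= \mathbf{e}^{\text{Alt}}_S(s)$ and take $(J_S(s))^b$ to be the $s$-weighted symplectic flux modeled on the spatial component of $J^{\text{Reg}}$. Its spacetime divergence $\ptl_0(J_S(s))^0 + \ptl_b(J_S(s))^b$ vanishes by a direct substitution of \eqref{twist-dynamics}, while the flux through a $t=\text{const}$ slice equals $\int_{\mbo{\Sigma}}\mathbf{e}^{\text{Alt}}_S(s)\,d^2x = H^{\text{Alt}}_S(s) > 0$ by the positivity established above.

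The main obstacle I anticipate lies in the bookkeeping of the second paragraph: one must verify that \emph{every} $s$-dependent term generated by completing the square is absorbed either into the $L_1, L_2$ multiples or into an exact spatial divergence, leaving no residual contribution to the variational derivatives. This amounts to writing out the generalized Robinson identity in full, checking that the regularizing substitution $\ulin{\lambda}=e^{-\gamma}\lambda$, $\ulin{\eta}=e^{-\gamma}\eta$ redistributes the $s$-weighted background terms consistently, and tracking the twist relation \eqref{grav-twist} between $\mathcal{A}_0$ and $\omega$, which is what keeps the cross terms integrable and regular near the axis $\Gamma$.
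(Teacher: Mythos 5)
Your proposal is correct and follows essentially the same route as the paper: the same $s$-weighted interpolation of the cross terms in \eqref{Ham-orig} and \eqref{Ham-orig'}, the same generalized Robinson identity with completed squares and the non-negative residual $2Ns(1-s)\,\bar{\mu}_q q^{ab}e^{-2\gamma}\big(\ptl_a\gamma\ptl_b\gamma+\tfrac{1}{4}e^{-4\gamma}\ptl_a\omega\ptl_b\omega\big)(\lambda^2+\eta^2)$, the same conclusion that every $H^{\text{Alt}}_S(s)$ generates the $s$-independent Hamilton equations \eqref{para-kin}, \eqref{para-pot} (whence the Poisson bracket vanishes by antisymmetry), and the same construction of the divergence-free current modeled on $J^{\text{Reg}}$ for part (2). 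The only difference is one of execution rather than substance: where you argue the $s$-independence of the variational derivatives abstractly (the densities differ by multiples of $L_1$, $L_2$, which vanish on the Kerr background, and by exact spatial divergences that contribute nothing for compactly supported variations), the paper verifies the same fact by an explicit term-by-term computation of $D_u$, $D_v$, $D_\lambda$, $D_\eta$ acting on $H^{\text{Alt}}_S(s)$ --- and the paper additionally notes that, in contrast with \eqref{Rob-modified}, the $1$-parameter identity never needs $L_2$ directly, so your proposed splitting of $\halb L_2\lambda\eta$ is harmless but unnecessary.
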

\begin{proof}
Consider the linear sum of the sub-Hamiltonians \eqref{Ham-orig} and \eqref{Ham-orig'} for $s \in [0, 1]$ as follows

\begin{align}
H_S(s) \fdg = &\int_{\mbo{\Sigma}} \Big( \halb N \bar{\mu}^{-1}_q e^{2\gamma} (u^2 + v^2) + \halb N \bar{\mu}_q q^{ab} e^{-2\gamma} ( \ptl_a \lambda \ptl_b \lambda + \ptl_a \eta \ptl_b \eta) \notag\\
&+ s N e^{-4\gamma} \bar{\mu}_q q^{ab} \ptl_a \omega \ptl_b \eta \lambda - (1-s) N e^{-4\gamma} \bar{\mu}_q q^{ab} \ptl_a \omega \ptl_b \lambda \eta \Big)d^2 x
\end{align}
Introduce the quantity $I(s)$ such that 
\begin{align}
I(s) \fdg =&\frac{1}{4} N \bar{\mu}_q e^{-2\gamma} q^{ab} \Big( ( \ptl_a \eta - 4 (1-s)\eta \ptl_a \gamma)( \ptl_b \eta- 4(1-s) \eta \ptl_b \gamma) \\
&+ (\ptl_a  \lambda - 4 s \lambda \ptl_a \gamma)( \ptl_b \lambda-4 s \lambda
\ptl_b \gamma)\Big) \notag\\
& + \frac{1}{4} N \bar{\mu}_q e^{-2\gamma} q^{ab} \Big(  (\ptl_a \eta + 2 s \lambda e^{-2\gamma} \ptl_a \omega)( \ptl_b \eta + 2s \lambda e^{-2\gamma} \ptl_b \omega) \notag\\
&+ (\ptl_a \lambda -2 (1-s) e^{-2\gamma} \ptl_a \omega) (\ptl_b \lambda -2 (1-s) \eta e^{-2\gamma} \ptl_b \omega) \Big) \notag\\
&- \halb N \bar{\mu}_q q^{ab} q^{ab} e^{-2\gamma} (\ptl_a \eta \ptl_b \eta + \ptl_a\lambda \ptl_b \lambda)
\intertext{and} 
II(s) \fdg =& - \ptl_b \Big( N \bar{\mu}_q q^{ab} e^{-2\gamma} \ptl_a \gamma ( s \lambda^2 + (1-s) \eta^2 ) \Big) 
\intertext{such that $I(s)-II(s)$ can be expressed as, after the imposition of the background field equations}
I(s)-II(s)=&  s N e^{-4\gamma} \bar{\mu}_q q^{ab} \ptl_a \omega \ptl_b \eta \lambda - (1-s) N e^{-4\gamma} \bar{\mu}_q q^{ab} \ptl_a \omega\ptl_b \lambda \eta \notag\\
&- 2N \bar{\mu}_q q^{ab} e^{-2\gamma} (\ptl_a \gamma \ptl_b \gamma + \frac{1}{4} e^{-4\gamma} \ptl_a \omega 
\ptl_b \omega) (s(1-2s)\lambda^2 +(1-s)(1-2(1-s))\eta^2) 
\end{align}
As a consequence, we shall transform the original Hamiltonian into the positive-definite form for $s \in [0, 1]$:
\begin{align}
H^{\text{Alt}}_S (s) \fdg =& \int_{\mbo{\Sigma}} \Big( \halb N e^{2\gamma} \bar{\mu}^{-1}_q (u^2 + v^2) +  \frac{1}{4} N \bar{\mu}_q e^{-2\gamma} q^{ab} \Big( ( \ptl_a \eta - 4 (1-s)\eta \ptl_a \gamma)( \ptl_b \eta- 4(1-s) \eta \ptl_b \gamma) \notag\\
&+ (\ptl_a  \lambda - 4 s \lambda \ptl_a \gamma)( \ptl_b \lambda-4 s \lambda
\ptl_b \gamma)\Big) + (\ptl_a \eta + 2 s \lambda e^{-2\gamma} \ptl_a \omega)( \ptl_b \eta + 2s \lambda e^{-2\gamma} \ptl_b \omega) \notag\\
&+ (\ptl_a \lambda -2 (1-s) e^{-2\gamma} \eta \ptl_a \omega) (\ptl_b \lambda -2 (1-s) \eta e^{-2\gamma} \ptl_b \omega) \Big) \notag\\
&+ 2N \bar{\mu}_q q^{ab} e^{-2\gamma} (\ptl_a \gamma \ptl_b \gamma + \frac{1}{4} e^{-4\gamma} \ptl_a \omega 
\ptl_b \omega) (s(1-2s)\lambda^2 +(1-s)(1-2(1-s))\eta^2) \Big)
d^2x \notag\\
\end{align}
where we have effectively constructed a generalized $1-$parameter family of Robinson's identities. 
We would like to remark that, interestingly, in the construction above we are not directly imposing the $L_2$ wave map equation, in contrast with \eqref{Rob-modified}
and (2.32) in \cite{GM17}. We shall now prove that $H^{\text{Alt}}_S (s)$ has the Hamiltonian structure. We recover: 
\begin{align}\label{para-kin}
D_u \cdot H^{\text{Alt}}_S(s) = N e^{2\gamma} \bar{\mu}^{-1}_q u, \quad D_{v} \cdot H^{\text{Alt}}_S(s) = N e^{2\gamma} \bar{\mu}^{-1}_q v.
\end{align}
Now consider the quantities, $D_\lambda \cdot H^{\text{Alt}}_S(s)$ and $D_\eta \cdot H^{\text{Alt}}_S(s)$ respectively. The following terms constitute $
D_\lambda \cdot H^{\text{Alt}}_S(s) :$ 
\begin{align}
N e^{-2\gamma} \bar{\mu}_q q^{ab} \ptl_a \lambda \ptl_b \lambda' =& \ptl_b ( N e^{-2\gamma} \bar{\mu}_q q^{ab} \ptl_a \lambda \lambda') - \ptl_b (N e^{-2\gamma} \bar{\mu}_q q^{ab} \ptl_a \lambda) \lambda', \notag\\
- (1-s)N \bar{\mu}_q e^{-4\gamma} q^{ab} \eta \ptl_a \lambda' \ptl_b \omega
=& - (1-s) \ptl_b (\eta N \bar{\mu}_q e^{-4\gamma} q^{ab} \ptl_b \omega \lambda') \notag\\
&+ (1-s) \ptl_b (\eta e^{-4\gamma} \bar{\mu}_q q^{ab} \ptl_a \omega) \lambda', 
\end{align}
\begin{align}
&-2s\lambda N \bar{\mu}_q e^{-2\gamma}q^{ab} \ptl_a \lambda' \ptl_b \gamma 
-2s\lambda' N \bar{\mu}_q e^{-2\gamma} q^{ab} \ptl_a \lambda \ptl_b \gamma \notag \\
&= -2s \ptl_b ( \lambda N \bar{\mu}_q e^{-2\gamma}q^{ab} \ptl_a \lambda \lambda') 
+ 2s \ptl_b (\lambda N \bar{\mu}_q e^{-2\gamma} q^{ab} \ptl_a \gamma) \lambda' \notag\\
&\quad-2s\lambda' N \bar{\mu}_q e^{-2\gamma} q^{ab} \ptl_a \lambda \ptl_b \gamma,
\end{align}
\begin{align}
sN \bar{\mu}_q e^{-4\gamma} q^{ab}\ptl_a \omega \ptl_b \eta \lambda',
\end{align}
and
\begin{align}
4N \bar{\mu}_q q^{ab} e^{-2\gamma} (\ptl_a \gamma \ptl_b \gamma + \frac{1}{4} e^{-4\gamma} \ptl_a \omega 
\ptl_b \omega) (s\lambda \lambda') 
\end{align}
where $\lambda'$ is the first variation of $\lambda.$ Likewise, $D_\eta \cdot H^{\text{Alt}}_S (s)$ is made of the terms
\begin{align}
N e^{-2\gamma} \bar{\mu}_q q^{ab} \ptl_a \eta \ptl_b \eta' =& \ptl_b (Ne^{-2\gamma} \bar{\mu}_q q^{ab} \ptl_a \eta \eta') - \ptl_b(N e^{-2\gamma} \bar{\mu}_q q^{ab} \ptl_b \eta) \eta', \notag\\
sN \lambda \bar{\mu}_q e^{-4\gamma} q^{ab} \ptl_a \eta' \ptl_b \omega =& s\ptl_b(\lambda N \bar{\mu}_q e^{-4\gamma} q^{ab} \ptl_a \omega \eta') - s \ptl_b(\lambda N \bar{\mu}_q e^{-4\gamma} q^{ab} \ptl_a \omega) \eta',
\end{align}
\begin{align}
&-2(1-s) N e^{-2\gamma} \bar{\mu}_q q^{ab}\ptl_a \eta \ptl_b \gamma \eta' 
- 2(1-s)N e^{-2\gamma} \bar{\mu}_qq^{ab} \eta \ptl_a \eta' \ptl_b \gamma \notag\\
&= -2(1-s) N e^{-2\gamma} \bar{\mu}_qq^{ab} \ptl_a \eta \ptl_b \gamma \eta' - 2(1-s) \ptl_b (\eta N e^{-2\gamma} \bar{\mu}_q 
\bar{\mu}_q q^{ab} \ptl_a \gamma \eta') \notag\\
&\quad+ 2(1-s) \ptl_b (\eta N e^{-2\gamma} \bar{\mu}_q q^{ab} \ptl_a \gamma )\eta',
\end{align}
\begin{align}
-(1-s)e^{-4\gamma} N\bar{\mu}_q q^{ab}\ptl_a \omega \ptl_b \lambda
\end{align}
and 
\begin{align}
4N \bar{\mu}_q q^{ab} e^{-2\gamma} (\ptl_a \gamma \ptl_b \gamma + \frac{1}{4} e^{-4\gamma} \ptl_a \omega 
\ptl_b \omega) (1-s)\eta \eta'
\end{align}
for the first variation $\eta'$ of $\eta.$
Collecting all the expressions above and using the background field equations, we recover the Hamiltonian field equations: 
\begin{subequations}\label{para-pot}
\begin{align}
D_\lambda \cdot H^{\text{Alt}}_S(s) =& - \ptl_b (N e^{-2\gamma} \bar{\mu}_q q^{ab} \ptl_a \lambda)  +  N \bar{\mu}_q q^{ab} e^{-4\gamma} \ptl_a \omega \ptl_b \eta, \\
D_\eta \cdot H^{\text{Alt}}_S(s) =& - \ptl_b (N e^{-2\gamma} \bar{\mu}_q q^{ab} \ptl_a \eta) -   N \bar{\mu}_q q^{ab} e^{-4\gamma} \ptl_a \omega \ptl_b \lambda.
\end{align}
\end{subequations}
In principle, if we have two conserved quantities, their Poisson bracket provides another conserved quantity. 
However, it follows immediately from \eqref{para-kin} and \eqref{para-pot}, that the Poisson bracket 
\begin{align}
\Big \{  H^{\text{Alt}}_{S} (s), H^{\text{Alt}}_{S} (\tau) \Big \} \equiv 0  
\end{align}
for any fixed $s, \tau \in [0, 1], s \neq \tau.$ In other words, the $1-$parameter family $H^{\text{Alt}}_{S} (s), s \in [0, 1]$ are in involution. 

 If we consider the phase space $\ulin{X}$ we can transform the aforementioned Hamiltonian energy density as follows
 \begin{align}
 &\frac{1}{4}N e^{-2\gamma} \bar{\mu}_q q^{ab} \big((\ptl_a \lambda -4s\lambda \ptl_a\gamma) (\ptl_b \lambda-4s\lambda \ptl_b \gamma) + (\ptl_a \eta -4(1-s)\eta \ptl_a\gamma) (\ptl_b \eta-4(1-s)\eta \ptl_b \gamma) \big) \notag\\
  &+ \frac{1}{4} N e^{-2\gamma} \bar{\mu}_q q^{ab} \big(( \ptl_a \eta + 2s \lambda e^{-2\gamma} \ptl_a \omega)(\ptl_b \eta + 2s\lambda e^{-2\gamma} \ptl_b \omega) \notag \\
  &+ (\ptl_a \lambda - 2(1-s)\eta e^{-2\gamma} \ptl_a \omega)(\ptl_b \lambda - 2(1-s) \eta e^{-2\gamma} \ptl_b \omega) \big)  \notag\\
  &+ 2N\bar{\mu}_q q^{ab} e^{-2\gamma} (\ptl_a \gamma \ptl_b \gamma + \frac{1}{4}e^{-4\gamma} \ptl_a \omega \ptl_b \omega) (s(1-2s) \lambda^2 + (1-s) (1-2(1-s)) \eta^2) \notag\\
 =\quad&\halb N \bar{\mu}_q e^{-2\gamma} q^{ab} \Big( (\ptl_a \eta - 2 (1-s) \eta \ptl_a \gamma + s \lambda e^{-2\gamma} \ptl_a \omega) (\ptl_b \eta - 2 (1-s) \eta \ptl_b \gamma + s \lambda e^{-2\gamma} \ptl_b \omega) \notag\\
 &+ (\ptl_a \lambda -2s \lambda \ptl_a \gamma - (1-s) \eta e^{-2\gamma} \ptl_a \omega) (\ptl_b \lambda -2s \lambda \ptl_b \gamma - (1-s) \eta e^{-2\gamma} \ptl_b \omega)  \Big) \notag\\
 &+ 2Ns(1-s) \bar{\mu}_q q^{ab} e^{-2\gamma} (\ptl_a \gamma \ptl_b \gamma + \frac{1}{4} e^{-4\gamma} \ptl_a \omega 
\ptl_b \omega) ( \lambda^2 + \eta^2) \\
 =\quad& \halb N \bar{\mu}_q q^{ab} \Big( (\ptl_a \ulin{\lambda} -2 (s-\halb) \ulin{\lambda}\ptl_a \gamma- (1-s) \ulin{\eta} e^{-2\gamma} \ptl_a \omega) (\ptl_b \ulin{\lambda} -2 (s-\halb) \ulin{\lambda} \ptl_b \gamma- (1-s) \ulin{\eta} e^{-2\gamma} \ptl_b \omega) \notag\\
  &+ (\ptl_a \ulin{\eta} - 2(\halb-s)\ulin{\eta}\ptl_a \gamma+ s \ulin{\lambda} e^{-2\gamma} \ptl_a \omega)(\ptl_b \ulin{\eta} - 2(\halb-s)\ulin{\eta}\ptl_a \gamma+ s \ulin{\lambda} e^{-2\gamma} \ptl_b \omega) \Big) \notag\\
  &+ 2 s(1-s)N \bar{\mu}_q q^{ab}( \ptl_a \gamma \ptl_b \gamma +  \frac{1}{4} e^{-4\gamma} \ptl_a \omega \ptl_b \omega) (\ulin{\lambda}^2 + \ulin{\eta}^2).
\end{align}

\noindent So that we have the expression, 

\begin{align}
H^{\text{Reg}}_S(s) \fdg =& \int_{\mbo{\Sigma}} \Big( \halb N \bar{\mu}^{-1}_q (\ulin{u}^2 + \ulin{v})^2+ 2 s(1-s)N \bar{\mu}_q q^{ab}( \ptl_a \gamma \ptl_b \gamma +  \frac{1}{4} e^{-4\gamma} \ptl_a \omega \ptl_b \omega) (\ulin{\lambda}^2 + \ulin{\eta}^2) \notag\\
& \halb N \bar{\mu}_q q^{ab} \Big( (\ptl_a \ulin{\lambda} -2 (s-\halb) \ulin{\lambda}\ptl_a \gamma- (1-s) \ulin{\eta} e^{-2\gamma} \ptl_a \omega) (\ptl_b \ulin{\lambda} -2 (s-\halb) \ulin{\lambda} \ptl_b \gamma- (1-s) \ulin{\eta} e^{-2\gamma} \ptl_b \omega) \notag\\
  &+ (\ptl_a \ulin{\eta} - 2(\halb-s)\ulin{\eta}\ptl_a \gamma+ s \ulin{\lambda} e^{-2\gamma} \ptl_a \omega)(\ptl_b \ulin{\eta} - 2(\halb-s)\ulin{\eta}\ptl_a \gamma+ s \ulin{\lambda} e^{-2\gamma} \ptl_b \omega) \Big) \Big)d^2x 
\end{align}

\noindent which also serves as a Hamiltonian for the dynamics of $\ulin{X}$ i.e., 
\begin{align}\label{Ham-S}
D_{\ulin{\lambda}} \cdot H^{\text{Reg}}_S(s)= - \ptl_t v,& \quad D_{\ulin{\eta}} \cdot  H^{\text{Reg}}_S(s)= - \ptl_v u, \\
 D_{\ulin{v}} \cdot H^{\text{Reg}}_S(s)= \ptl_t \lambda,& \quad D_{\ulin{u}} \cdot H^{\text{Reg}}_S(s)= -\ptl_t \eta.
\end{align}
Upon appropriate adjustment of notation, this energy functional matches with the one obtained in \cite{PW_17}. 
Let us now calculate the $\frac{\ptl}{\ptl t} \mathbf{e}_S^{\text{Reg}} (s),$ where $\mathbf{e}_S^{\text{Reg}} (s)$ is the energy density i.e., $H_S^{\text{Reg}} (s) = \int_{\mbo{\Sigma}} \mathbf{e}_S^{\text{Reg}} (s) d^2x.$ Define the quantities $\bar{u} \fdg = N \bar{\mu}^{-1}_q u$
and $\bar{v} \fdg = N \bar{\mu}^{-1}_q v$, then the $\ptl_a (\ptl_t \eta)$  and $\ptl_a (\ptl_t \lambda)$ terms can be represented as
\begin{align}
N \bar{\mu}_q q^{ab} \ptl_a \bar{u} (\ptl_a \eta -2(1-s) \eta \ptl_b \gamma + s \lambda e^{-2\gamma} \ptl_b \gamma) \notag\\+ 2 \bar{u}
N \bar{\mu}_q q^{ab} \ptl_a \gamma (\ptl_a \eta -2(1-s) \eta \ptl_b \gamma + s \lambda e^{-2\gamma} \ptl_b \gamma)
\intertext{and}
N \bar{\mu}_q q^{ab} \ptl_a \bar{v} (\ptl_b \lambda -2s \lambda \ptl_b \gamma -(1-s) e^{-2\gamma} \ptl_b \omega) \notag\\
2 \bar{v}N \bar{\mu}_q q^{ab} \ptl_a \gamma (\ptl_b \lambda -2s \lambda \ptl_b \gamma -(1-s) e^{-2\gamma} \ptl_b \omega)
\end{align}
respectively. Likewise, the $\ptl_t \eta$ and $\ptl_t \lambda$ terms can be represented as 
\begin{align}
&\bar{u}\Big( \ptl_b (N \bar{\mu}_q q^{ab} \ptl_a \eta) + N \bar{\mu}_q q^{ab} (-2 \ptl_a \eta \ptl_b \gamma + e^{-2\gamma}\ptl_a\omega \ptl_b \lambda -2 (1-s) \ptl_a \gamma (\ptl_b \eta -2(1-s) \eta \ptl_b \gamma + s \lambda e^{-2\gamma} \ptl_b \omega) \notag\\
&- (1-s) e^{-2\gamma} \ptl_a \omega ( \ptl_b \lambda -2s \lambda \ptl_b \gamma -(1-s) \eta e^{}-2\gamma) \ptl_b \omega +4s(1-s) (\ptl_a \gamma \ptl_b \gamma + \frac{1}{4} e^{-4\gamma} \ptl_a \omega \ptl_b \omega)\eta ) \Big)
\intertext{and}
& \bar{v} \Big( \ptl_b (N \bar{\mu}_q q^{ab} \ptl_b \lambda) + N\bar{\mu}_q q^{ab} (-2 \ptl_a \gamma \ptl_b \lambda+ e^{-2\gamma}\ptl_a \omega \ptl_b \eta  + s \lambda e^{-2\gamma} \ptl_a \omega (\ptl_a \eta
-2(1-s) \eta \ptl_b \gamma + s \lambda e^{-2\gamma} \ptl_b \omega)\notag\\
&-2s \ptl_a \gamma (\ptl_a \lambda -2s \lambda \ptl_b \gamma - (1-s) \eta e^{-2\gamma} \ptl_b \omega) + 4s(1-s) (\ptl_a \gamma \ptl_b \gamma + e^{-4\gamma} \ptl_a \omega \ptl_b \omega)\lambda) \Big).
\end{align}

\noindent Collecting the $\ptl_a \bar{u}, \bar{u}$ and $\ptl_a \bar{v}, \bar{v}$ separately in the above, we get

\begin{align}
\frac{\ptl}{\ptl t}\mathbf{e}_S^{\text{Reg}} (s) =&\, \ptl_b \Big(N \bar{\mu}_q q^{ab}\bar{u} (\ptl_a \eta -2(1-s) \eta \ptl_b \gamma + s \lambda e^{-2\gamma} \ptl_b \omega)  \notag\\
&+ N \bar{\mu}_q q^{ab} \bar{v} ( \ptl_a \lambda
-2s\lambda \ptl_a \gamma - (1-s) \eta e^{-2\gamma} \ptl_a \omega)\Big) \notag\\
=&\, \ptl_b \Big(N^2 q^{ab} u (\ptl_a \eta -2(1-s) \eta \ptl_b \gamma + s \lambda e^{-2\gamma} \ptl_b \omega)  \notag\\
&+ N^2 q^{ab} v ( \ptl_a \lambda
-2s\lambda \ptl_a \gamma - (1-s) \eta e^{-2\gamma} \ptl_a \omega)\Big).
\end{align}
Therefore, the vector field density $J^{\text{Reg}}_S(s)$ defined as 
\begin{subequations}
\begin{align}
(J^{\text{Reg}}_S(s))^t \fdg=& \mathbf{e}^{\text{Reg}}_S(s) \\
(J^{\text{Reg}}_S(s))^a \fdg=&-N^2 q^{ab} \Big(u (\ptl_a \eta -2(1-s) \eta \ptl_b \gamma + s \lambda e^{-2\gamma} \ptl_b \omega)  \notag\\
&+ v ( \ptl_a \lambda
-2s\lambda \ptl_a \gamma - (1-s) \eta e^{-2\gamma} \ptl_a \omega) \Big)
\end{align}
\end{subequations}
is (spacetime) divergence free. As we already noted, the divergence-free $J^{\text{Reg}}_S$ has additional information than \eqref{Ham-S} in that it can be used to relate the fluxes through various hypersurfaces, without a bulk term. 
\end{proof}


\section{A Conserved Energy For Newman-Penrose-Maxwell Scalars}

 For convenience, let us now represent the tetrad 1-forms in Boyer-Lindquist coordinates, that are consistant with the normalization introduced above: 
\begin{subequations}
\begin{align}
\ell =& \frac{1}{\Delta} (-\Delta dt +\Sigma dr +a \sin^2\theta \Delta d\phi ), \\
n =& \frac{1}{2\Sigma} (-\Delta dt - \Sigma dr +a \sin^2 \theta \Delta d\phi), \\
m =& \frac{1}{\olin{\Sigma} \sqrt{2}} (-i a \sin\theta dt + \Sigma d\theta + i (r^2+a^2) \sin \theta d\phi), \\
m^*=& \frac{1}{\olin{\Sigma}^* \sqrt{2}} (i a \sin\theta dt + \Sigma d\theta - i (r^2+a^2) \sin \theta d\phi),
\end{align} 
\end{subequations}
so that,
\begin{align}\label{tetrad-low}
\bar{g}_{\mu \nu} = - \ell_\mu n_\nu - n_\mu \ell_\nu + m_{\mu}m^*_\nu + m^*_\mu m_\nu.
\end{align}
In this work, we shall use the following convention for the anti-symmetric sum $X_{[a}Y_{b]} \fdg = X_a Y_b - X_b Y_a$ (i.e., without the factor of 2). 
Upon the inversion of basis and taking advantage of the tetrad form \eqref{tetrad-low}, the  $\mathfrak{E}$ and 
$\mathfrak{B}$ fields can be represented in terms of the Maxwell scalars as follows: 

\begin{align}
\mathfrak{E}^i =& 2e^{-2\gamma}N \bar{\mu}_{q} \left( \text{Re} ( \Phi_0 m^{*[0} n^{i]}+ \Phi_1 (n^{[0} \ell^{i]} + m^{[0} m^{*i]}) + \Phi_2 \ell^{[0} m^{i]}  )\right),\\
\mathfrak{B}^i =& \eps^{ijk} \text{Re} (\Phi_0 m^{*}_{[j} n_{k]}+ \Phi_1 (n_{[j} \ell_{k]} + m_{[j} m_{k]}^{*}) + \Phi_2 \ell_{[j} m_{k]}  ),\quad i,j = 1,2, 3.
\end{align}
where $\text{Re}(z) = 2^{-1} (z + z^*).$
For later use, let us collect the following quantities in Boyer-Lindquist coordinates: 

\begin{subequations}
\begin{align}
\ell^{[0} m^{3]} =& \frac{i}{\sqrt{2} \olin{\Sigma} \Delta} ( -a^2 \sin^2 \theta +  \csc \theta (r^2 + a^2))\\
 m^{*[0} n^{3]}=& \frac{i}{2 \sqrt{2} \Sigma \olin{\Sigma}} (\csc \theta (r^2 +a^2) - a^2 \sin \theta ) \\
\ell^{[1} m^{2]} =& \frac{1}{\sqrt{2} \olin{\Sigma}} , \quad m^{*[1} n^{2]}= -\frac{\Delta}{2 \sqrt{2} \Sigma \olin{\Sigma}^*}  \\
\ell^{[0} n^{3]} =& 0, \quad m^{*[0} m^{3]}= 0, \quad\ell^{[1} n^{2]} = 0, \quad m^{*[1} m^{2]}= 0.
\end{align}
\end{subequations}

In the following, we shall represent the Maxwell scalars $\Phi_0, \Phi_1, \Phi_2$ in terms of the phase space variables $X = \{ (\lambda, v), (\eta, u) \}$ and 
dimensionally reduced form. 

\begin{subequations}\label{Maxwell-scalars-twist}
\begin{align}
\Phi_0 =& N e^{2\gamma} \bar{\mu}_q v \frac{i}{\sqrt{2} \olin{\Sigma} \Delta} ( -a^2 \sin^2 \theta +  \csc \theta (r^2 + a^2)) + u  \frac{1}{\sqrt{2} \olin{\Sigma}} \notag\\
&+ \left(\frac{(-N^2 e^{-2\gamma} + e^{2\gamma} \mathcal{A}^2_0)N^{-1} \bar{\mu}^{-1}_q q_{ab} \eps^{bc} \ptl_c \eta}{1-N^{-2} e^{4\gamma} \mathcal{A}^2_0}-\mathcal{A}_0 \ptl_a \lambda \right) \ell^{[0} m^{a]} + \ptl_a \lambda \ell^{[a} m^{3]}, \\ 
\Phi_1=& \halb \left(\frac{(-N^2 e^{-2\gamma} + e^{2\gamma} \mathcal{A}^2_0)N^{-1} \bar{\mu}^{-1}_q q_{ab} \eps^{bc} \ptl_c \eta}{1-N^{-2} e^{4\gamma} \mathcal{A}^2_0}-\mathcal{A}_0 \ptl_a \lambda \right) (\ell^{[0} n^{a]} + m^{*[0}m^{a]})\notag\\
 &+ \halb\ptl_a \lambda (\ell^{[a} m^{3]} + m^{*[a}m^{3]}), \\
 \intertext{and}
\Phi_2 =& N e^{2\gamma} \bar{\mu}_q v \frac{i}{2\sqrt{2} \Sigma \olin{\Sigma}} ((r^2+a^2)\csc \theta -a^2 \sin \theta) +  u \frac{-\Delta}{2\sqrt{2}\Sigma \olin{\Sigma}^*} \notag\\
&+ \left(\frac{(-N^2 e^{-2\gamma} + e^{2\gamma} \mathcal{A}^2_0)N^{-1} \bar{\mu}^{-1}_q q_{ab} \eps^{bc} \ptl_c \eta}{1-N^{-2} e^{4\gamma} \mathcal{A}^2_0}-\mathcal{A}_0 \ptl_a \lambda \right) m^{*[0} n^{a]} + \ptl_a \lambda m^{*[a} n^{3]}.
\end{align}
\end{subequations}
\subsection*{Derivative Operators and Spin Coefficients}
Let us define the (directional) derivative operators along the tetrad $(\ell, n, m, m^*)$ as follows 
\begin{align}
\mbo{D} \fdg = \ell^\mu \ptl_\mu, \quad \mbo{\Delta} \fdg = n^\mu \ptl_\mu, \quad \mbo{\delta} \fdg = m^\mu \ptl_\mu, \quad \mbo{\delta}^* \fdg = \bar{m}^\mu \ptl_\mu.
\end{align}
In consistancy with our Hamiltonian framework, we had to chose the $(-+++)$ sign convention for our metric. As a consequence, the null tetrad has `$(--++)$' sign convention (cf. \eqref{tetrad-sign-conv}),
which in turn alters the definitions of 
spin coefficients from the standard literature (e.g., \cite{Chandrasekhar_83}). 
We shall now define spin coefficients from first principles and evaluate them for the Kerr metrics as per our conventions, for the convenience of the reader. We shall also derive the Maxwell's equations for $\Phi_0, \Phi_1, \Phi_2$ accordingly.  
\begin{subequations}\label{spin-coeff}
\begin{align}
\mbo{\r} \fdg =&- m^\mu m^{*\nu} \bar{\grad}_\nu \ell_\mu =  \frac{1}{\olin{\Sigma}^*},  \\
 \mbo{\tau} \fdg =&- m^\mu n^\nu \bar{\grad}_\nu \ell_\mu
= \frac{i a \sin \theta}{\sqrt{2}\Sigma}, \\
\mbo{\mu} \fdg=&  m^{*\mu} m^\nu \bar{\grad}_\nu n_{\mu} =  \frac{\Delta}{2 \olin{\Sigma}^* \Sigma}, \\
\mbo{\pi} \fdg=&  m^{* \mu} \ell^\nu \bar{\grad}_\nu n_{\mu} =- \frac{i a \sin \theta}{\olin{\sqrt{2}\Sigma}^{* 2}}, \\
\mbo{\gamma} \fdg=& \halb ( -n^\mu n^\nu \bar{\grad}_\nu \ell_\mu + m^{*\mu} n^\nu \bar{\grad}_\nu m_\mu ) = \frac{\Delta}{2 \olin{\Sigma}^* \Sigma} - \frac{r-m}{2 \Sigma} ,\\
\mbo{\b} \fdg = & \halb (-n^\mu m^\nu \bar{\grad} _\nu \ell_\mu + m^{*\mu} m^\nu \bar{\grad}_\nu m_{\mu}) =- \frac{\cot \theta}{2 \sqrt{2}\,\olin{\Sigma}}, \\
\mbo{\a} \fdg=& \halb (-n^\mu m^\nu \bar{\grad}_\nu \ell_\mu + m^{*\mu} m^{*\nu} \bar{\grad}_\nu m_\mu) =- \frac{i a \sin \theta}{\olin{\sqrt{2}\Sigma}^{* 2}} + \frac{\cot \theta}{2 \sqrt{2} \olin{\Sigma}^*}.
\end{align}
\end{subequations}
From the definitions and in view of the fact that the Kerr metric is of Petrov type D, we have 
\begin{align}
&\mbo{\kappa} \fdg=- \ell^\mu m^\nu \bar{\grad}_\nu \ell_\mu \equiv \mbo{\sigma} \fdg = -m^\mu m^\nu \bar{\grad}_\nu \ell_\mu \equiv \mbo{\lambda} \fdg= m^{*\mu} m^{*\nu} \bar{\grad}_\nu n_{\mu} \equiv 0,\notag\\
& \mbo{\nu} \fdg = 
m^{*\mu} n^{\nu} \bar{\grad}_\nu n_{\mu}  \equiv \mbo{\eps} \fdg = \halb (- n^\mu \ell^\nu \bar{\grad}_\nu \ell_\mu + m^{*\mu} \ell^\nu \bar{\grad}_\nu m_{\mu})   \equiv 0
\end{align}
 for Kerr metrics. 
\subsection*{Maxwell's equations}
The Maxwell field equations 
\begin{align}
\bar{\grad}^\mu F_{\mu \nu} =0
\end{align}
can be written in the tetrad form as, 

\begin{subequations}
\begin{align}
-\bar{\grad}_{\ell} F_{\ell n} + \bar{\grad}_{m^*} F_{\ell m} + \bar{\grad}_{m} F_{\ell m^*} =&0 \\
-\bar{\grad}_n F_{n \ell} + \bar{\grad}_{m^*} F_{n m} + \bar{\grad}_{m} F_{n m^*} =&0 \\
-\bar{\grad}_{n} F_{m \ell} - \bar{\grad}_{\ell} F_{m n} + \bar{\grad}_m F_{m m^*}=&0 \\
-\bar{\grad}_n F_{m^* \ell} - \bar{\grad}_\ell F_{m^*n}+ \grad_{m^*} F_{m^*m} =& 0
\end{align}
\end{subequations}

\noindent Using the Bianchi identities \eqref{Bianchi-orig}, we get correspondingly
\begin{subequations}\label{Maxwell-cova}
\begin{align}
\bar{\grad}_\ell \Phi_1= \bar{\grad}_{m^*} \Phi_0,& \quad \bar{\grad}_m \Phi_2 = \bar{\grad}_{n} \Phi_1,\\
\bar{\grad}_\ell \Phi_2= \bar{\grad}_{m^*} \Phi_1,& \quad\bar{\grad}_{m}\Phi_1= \bar{\grad}_{n} \Phi_0
\end{align}
\end{subequations}
in the clockwise order. 
\noindent Now, eliminating the covariant derivatives acting on Maxwell scalars in favour of the
directional derivatives, we get the following
\begin{align}
\bar{\grad}_\a \Phi_1= \ptl_\a \Phi_1 - m^\mu (e_\a)^\nu \bar{\grad}_\nu \ell_\mu \Phi_2 - m^{*\mu} (e_\a)^\nu \bar{\grad}_\nu n_{\mu} \Phi_0,
\end{align}

\begin{subequations}
\begin{align}
\bar{\grad}_{m^*} \Phi_0= \mbo{\delta}^* \Phi_0 + 2\mbo{\r} - 2\mbo{\alpha} \Phi_1,& \quad \bar{\grad}_\ell \Phi_1 = \mbo{D} \Phi_1 + \mbo{\kappa} \Phi_2 - \mbo{\pi} \Phi_0, \\
\bar{\grad}_m \Phi_2 = \mbo{\delta} \Phi_2+ 2 \mbo{\b} \Phi_2 -2 \mu \Phi_1,& \quad \bar{\grad}_n \Phi_1 = \mbo{\Delta} \Phi_1 + \mbo{\tau} \Phi_2 - \mbo{\nu} \Phi_0, \\
\bar{\grad}_n \Phi_0= \Delta \Phi_0 + 2 \mbo{\tau} \Phi_1 -2 \mbo{\gamma} \Phi_0,& \quad\bar{\grad}_m \Phi_1 = \mbo{\delta} \Phi_1 + \mbo{\sigma} \Phi_2 - \mbo{\mu} \Phi_0, \\
\bar{\grad}_\ell \Phi_2 = \mbo{D} \Phi_2 + 2 \mbo{\eps} \Phi_2-2 \mbo{\pi} \Phi_1,& \quad \bar{\grad}_{m^*} \Phi_1 = \mbo{\delta}^* \Phi_1 + \mbo{\r} \Phi_2 - \mbo{\lambda} \Phi_0.
\end{align}
\end{subequations}

\noindent Consequently, the Maxwell's field equations on Kerr metrics are

\begin{subequations}\label{Maxwell-eqn}
\begin{align}
\mbo{D} \Phi_1 - \mbo{\delta}^* \Phi_0 =& (\mbo{\pi} -2\mbo{\a}) \Phi_0 + 2\mbo{\r} \Phi_1,  \\
\mbo{\Delta} \Phi_1 - \mbo{\delta} \Phi_2 =& (2 \mbo{\b} - \mbo{\tau}) \Phi_2-2\mbo{\mu} \Phi_1,  \\
  \mbo{\delta} \Phi_1-\mbo{\Delta} \Phi_0 =& (\mbo{\mu-}2\mbo{\gamma} ) \Phi_0 +2\mbo{\tau} \Phi_1,\\
\mbo{\delta}^* \Phi_1-\mbo{D} \Phi_2  =& (2\mbo{\eps}-\mbo{\r}) \Phi_2-  2\mbo{\pi} \Phi_1  
 \end{align}
\end{subequations}
respectively, where the spin coefficients for the Kerr metric are defined and expressed as in \eqref{spin-coeff}. In the case of axial symmetry, if we use the formulas \eqref{Maxwell-scalars-twist}, the satisfaction of system \eqref{Maxwell-eqn}, in consistancy with the field equations \eqref{twist-dynamics}, is readily verified. 

\begin{proposition}
Suppose $F= dA$ is a Maxwell tensor that satisfies the Maxwell's equations and is axially symmetric $\mathcal{L}_\phi F \equiv 0.$ Then, 
\begin{enumerate}
\item The Maxwell scalars $\Phi_0, \Phi_1, \Phi_2$ are also axially symmetric $\mathcal{L}_\phi \Phi_i \equiv 0, i=0, 1, 2$
\item Suppose, $\psi = \Phi_0$  or $ \olin{\Sigma}^{*2} \Phi_2$, then  
\begin{align} \label{Teukolsky-axi}
&\left(\frac{(r^2+a^2)^2}{\Delta} -a^2 \sin^2 \theta \right) \ptl^2_t \psi 
- \Delta^{-s} \ptl_r (\Delta^{s+1} \ptl_r \psi) - \frac{1}{\sin \theta} \ptl_\theta (\sin \theta \ptl_\theta \psi) \notag\\
 &-2s \left(\frac{m (r^2-a^2)}{\Delta} -r -ia \cos\theta \right) \ptl_t \psi + (s^2 \cot^2 \theta -s) \psi =0
\end{align}
\end{enumerate}
for $s= \pm 1.$
\end{proposition}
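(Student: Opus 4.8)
The plan is to treat the two assertions separately, the first being essentially immediate and the second the standard Teukolsky decoupling adapted to our conventions.

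\textbf{Part (1).} First I would observe that each leg of the Kennersley tetrad is invariant under the axial Killing field $\ptl_\phi$. Reading off the components of $\ell, n, m, m^*$ in the Boyer--Lindquist frame, their coefficients depend only on $(r,\theta)$, and the coordinate basis vectors $\ptl_t,\ptl_r,\ptl_\theta,\ptl_\phi$ all commute with $\ptl_\phi$; hence $[\ptl_\phi,\ell]=[\ptl_\phi,n]=[\ptl_\phi,m]=[\ptl_\phi,m^*]=0$, i.e.\ $\mathcal{L}_\phi$ annihilates every tetrad vector. Since $\Phi_0 = F_{\mu\nu}\ell^\mu m^\nu$ is a genuine scalar, the Leibniz rule for the Lie derivative gives
\[
\mathcal{L}_\phi\Phi_0 = (\mathcal{L}_\phi F)_{\mu\nu}\ell^\mu m^\nu + F_{\mu\nu}(\mathcal{L}_\phi\ell)^\mu m^\nu + F_{\mu\nu}\ell^\mu(\mathcal{L}_\phi m)^\nu = 0 ,
\]
using $\mathcal{L}_\phi F\equiv 0$ and tetrad invariance. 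The identical argument applied to $\Phi_1$ and $\Phi_2$ proves part (1).

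\textbf{Part (2).} For $\psi=\Phi_0$ (the $s=+1$ case) the relevant pair is the first and third equations of \eqref{Maxwell-eqn}, which I rewrite as
\[
(\mbo{D}-2\mbo{\r})\Phi_1 = (\mbo{\delta}^*+\mbo{\pi}-2\mbo{\a})\Phi_0, \qquad (\mbo{\delta}-2\mbo{\tau})\Phi_1 = (\mbo{\Delta}+\mbo{\mu}-2\mbo{\gamma})\Phi_0 .
\]
Applying $(\mbo{\delta}-2\mbo{\tau})$ to the first and $(\mbo{D}-2\mbo{\r})$ to the second and subtracting, the left-hand sides combine into the commutator $[\mbo{\delta}-2\mbo{\tau},\,\mbo{D}-2\mbo{\r}]\Phi_1$; evaluating this through the Newman--Penrose commutation relations and re-substituting the two equations to remove the remaining first derivatives of $\Phi_1$, all $\Phi_1$ contributions cancel on account of the Petrov type-D conditions $\mbo{\kappa}=\mbo{\sigma}=\mbo{\lambda}=\mbo{\nu}=\mbo{\eps}\equiv 0$ recorded above. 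What remains is a single second-order operator acting on $\Phi_0$ alone; inserting the explicit Kerr tetrad and the spin coefficients \eqref{spin-coeff} produces the general Teukolsky master operator.

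The final step is to specialize to axial symmetry. By part (1), $\ptl_\phi\Phi_i\equiv 0$, so every $\ptl_\phi$ and $\ptl_t\ptl_\phi$ term in the master operator drops out and the surviving terms are exactly those in \eqref{Teukolsky-axi} with $s=+1$. For the field built from $\Phi_2$ I repeat the decoupling using instead the second and fourth equations of \eqref{Maxwell-eqn}; here the undifferentiated scalar does not solve \eqref{Teukolsky-axi}, and the correctly weighted Teukolsky variable is $\psi=\olin{\Sigma}^{*2}\Phi_2=(r-ia\cos\theta)^2\Phi_2$, the power being fixed by the $\mbo{\r}$- and $\mbo{\mu}$-dependent coefficients so that the first-order radial term is absorbed into the self-adjoint form $\Delta^{-s}\ptl_r(\Delta^{s+1}\ptl_r\,\cdot\,)$, yielding \eqref{Teukolsky-axi} with $s=-1$.

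The principal obstacle is the bookkeeping in the decoupling: confirming that the $\Phi_1$ terms genuinely cancel demands the Newman--Penrose commutators written in our nonstandard $(--++)$ signature (cf.\ \eqref{tetrad-sign-conv}), which differ in sign from the textbook versions, and reproducing the precise zeroth-order potential $(s^2\cot^2\theta - s)$ together with the imaginary contribution $-ia\cos\theta$ in the coefficient of $\ptl_t\psi$ relies on the exact values in \eqref{spin-coeff} rather than their flat-space analogues. Determining the correct conformal weight $\olin{\Sigma}^{*2}$ for the $s=-1$ scalar is the other delicate point, since an incorrect power would leave an uncancelled first-order derivative and spoil the self-adjoint radial form.
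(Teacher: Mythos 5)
Your proposal is correct in substance, but it is considerably more ambitious than the paper's own proof, which is essentially proof-by-inspection-plus-citation. For part (1) you and the paper argue identically: the paper says it "follows by inspection, while noting that our tetrad is also axially symmetric," and your Leibniz-rule computation $\mathcal{L}_\phi\Phi_0=(\mathcal{L}_\phi F)(\ell,m)+F(\mathcal{L}_\phi\ell,m)+F(\ell,\mathcal{L}_\phi m)=0$ is just that observation made explicit (correctly so, since the Kinnersley tetrad components in Boyer--Lindquist coordinates depend only on $(r,\theta)$). For part (2), however, the paper does not derive anything: it simply identifies \eqref{Teukolsky-axi} as Teukolsky's master equation with axial symmetry imposed, citing \cite{Teukolsky_73}, with $|s|=1$ corresponding to Maxwell. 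You instead sketch the decoupling itself, which buys self-containedness at the cost of the bookkeeping you flag. One caveat there: with the bare operators $(\mbo{\delta}-2\mbo{\tau})$ and $(\mbo{D}-2\mbo{\r})$ the subtraction does not close by itself; after re-substituting the two Maxwell relations one is left with a residual first-order term proportional to $\mbo{\r}^*(\mbo{\delta}-2\mbo{\tau})\Phi_1$, which must itself be eliminated by a further use of the second equation --- equivalently, the clean choice is $(\mbo{D}-2\mbo{\r}-\mbo{\r}^*)$ (in the Kinnersley frame, where $\mbo{\eps}=0$ and the Kerr relation $\mbo{\a}^*+\mbo{\b}=\mbo{\pi}^*$ makes the $\mbo{\delta}$-side correction vanish), which is why Teukolsky and Chandrasekhar work with the modified operators. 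The cancellation of the $\Phi_1$ terms then rests on the background identities $\mbo{D}\mbo{\tau}=\mbo{\r}(\mbo{\tau}+\mbo{\pi}^*)$ and $\mbo{\delta}\mbo{\r}=\mbo{\r}\mbo{\pi}^*+(\mbo{\r}-\mbo{\r}^*)\mbo{\tau}$ valid for type D with $\mbo{\kappa}=\mbo{\sigma}=\mbo{\lambda}=\mbo{\nu}=0$, exactly as you indicate, and your identification of $\olin{\Sigma}^{*2}\Phi_2$ as the correctly weighted $s=-1$ variable matches the standard (and the paper's) normalization. So: part (1) is the paper's argument; part (2) is a legitimate re-derivation of what the paper merely invokes by reference, modulo the operator correction just noted.
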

\begin{proof}
Part 1) follows  by inspection, while noting that our tetrad is also axially symmetric and part 2) is the famous Teukolsky's master equation \cite{Teukolsky_73} with axial symmetry, for which, the case $\vert s \vert=1$ corresponds to Maxwell's equations. The `extreme' components $\Phi_0$ and $\Phi_2$ are also related  by the celebrated Teukolsky-Starobinsky differential identities. 
\end{proof}
We would like to remark that the Maxwell perturbations are governed by the two independent degrees of freedom, corresponding to the Maxwell scalars $\Phi_0$ and $\Phi_2.$ However, the transformation of the Maxwell field equations to the field equations (Teukolsky's equation \eqref{Teukolsky-axi}) for these extreme components are governed by the higher order differential operators. In this work we shall focus on the \emph{total}  energy of the \emph{fundamental} Maxwell field equations \eqref{Maxwell-eqn}, so it involves all the Maxwell scalars. 
It may be noted that the lack of positivity of energy also affects the dynamics of the Maxwell scalars $\Phi_0, \Phi_1, \Phi_2.$ This is evident if we represent the original Maxwell energy, corresponding to the Hamiltonian flow of $\ptl_t,$ in terms of the axially symmetric NP scalars on the Kerr metric:

 \begin{align}\label{Bel-Rob-Max}
E(\Phi_0,& \Phi_1, \Phi_1) \notag\\
\fdg= \int_{\mbo{\Sigma}} \Big( &\halb N e^{2\gamma} \bar{\mu}^{-1}_q \Big(  (2e^{-2\gamma}N \bar{\mu}_{q} ( \text{Re} ( \Phi_0 m^{*[0} n^{3]}+ \Phi_1 (n^{[0} \ell^{3]} + m^{[0} m^{*3]})\notag\\ 
+&\Phi_2 \ell^{[0} m^{3]}  )))^2 
 +  (2\text{Re} (\Phi_0 m^{*}_{[1} n_{2]}+ \Phi_1 (n_{[1} \ell_{2]} + m_{[1} m_{2]}^{*}) + \Phi_2 \ell_{[1} m_{2]}  ))^2 \Big) \notag\\
&\halb N \bar{\mu}_q q^{ab} e^{-2\gamma} \Big( (2 \text{Re}(\Phi_0 m^*_{[a}n_{3]}+  \Phi_1 (n_{[a} \ell_{3]} + m_{[a} m^*_{3]}) + \Phi_2 \ell_{[a} m_{3]} ) )\notag\\
&(2\text{Re}(\Phi_0 m^*_{[b}n_{3]}+ \Phi_1 (n_{[b} \ell_{3]} + m_{[b} m^*_{3]}) + \Phi_2 \ell_{[b} m_{3]} ) \notag\\
 +& ( 2 e^{-2\gamma}N \bar{\mu}_q \eps_{ac} \text{Re}( \Phi_0 m^{[*0} n^{c]} + \Phi_1 (n^{[0}\ell^{c]}+ m^{[0} m^{*c]}) + \Phi_2 \ell^{[0}m ^{c]} ) \notag\\
 & ( 2 e^{-2\gamma}N \bar{\mu}_q \eps_{bc} \text{Re}( \Phi_0 m^{[*0} n^{c]} + \Phi_1 (n^{[0}\ell^{c]}+ m^{[0} m^{*c]}) + \Phi_2 \ell^{[0}m ^{c]} ) \Big) \notag\\
  -&\bar{N}^\phi \eps^{ab} \Big( ( 2\text{Re}(\Phi_0 m^*_{[a}n_{3]}+  \Phi_1 (n_{[a} \ell_{3]} + m_{[a} m^*_{3]}) + \Phi_2 \ell_{[a} m_{3]} ) \notag\\
 &  ( 2 e^{-2\gamma}N \bar{\mu}_q \eps_{bc} \text{Re}( \Phi_0 m^{[*0} n^{c]} + \Phi_1 (n^{[0}\ell^{c]}+ m^{[0} m^{*c]}) + \Phi_2 \ell^{[0}m ^{c]} ) \Big) \Big) d^2x.
\end{align}

In the following we shall construct a positive-definite
and conserved energy functional using a non-local canonical transformation from the twist potential variables. 

We would like to remark that the energy expression \eqref{Bel-Rob-Max} has a similar structure to the original Bel-Robinson energy of the Weyl scalars corresponding to the gravitational perturbations (cf. Appendix I in \cite{GM17}). However, in contrast with the Maxwell case, Weyl scalars differ in two orders of derivatives from the twist potential variables used in the construction of the positive-definite energy functional for gravitational perturbations (which in turn is closely related to the ADM mass). 

\begin{theorem}

\begin{enumerate}
\item 
Suppose, $\eta \fdg (\mbo{\Sigma}, q) \to \mathbb{R}$ and $\lambda \fdg (\mbo{\Sigma}, q) \to \mathbb{R}$ are the twist potentials such that $\mathfrak{E}^a = \eps^{ab}\ptl_b \eta$ and $\mathfrak{B}^a = \eps^{ab} \ptl_b \lambda$
then $\eta$ and $\lambda$ are uniquely given by 
\begin{align}
\eta =&\,   \ptl_b \big(2Ne^{2\gamma} \bar{\mu}^2_q q^{ab} \eps_{ca} \textnormal{Re} (\Phi_0 m^{*[0} n^{c]} + \Phi_1 (n^{0} \ell^{c]} + m^{[0}m^{*c]}) + \Phi_2 \ell^{[0}m^{c]}) \big) \star K \label{eta-rep}\\
\lambda=&\,  \ptl_b \big(2\bar{\mu}_q q^{ab} \textnormal{Re} (\Phi_0 m^*_{[a}n_{3]} + \Phi_1 (n_{[a} \ell_{3]} + m_{[a} m^*_{3]})
+\Phi_2 \ell_{[a} m_{3]})  \big)\star K, \label{lambda-rep}
\end{align}
respectively, where $K$ is the fundamental solution of the $2-$Laplacian and $\star$ is the convolution in $\mbo{\Sigma}$ with the flat metric.

\item There exits a $1-$parameter family of positive-definite and conserved energy functionals for the initial value problem of the Maxwell scalars $\Phi_0, \Phi_1, \Phi_2$
\eqref{Maxwell-eqn}
\end{enumerate}

\begin{proof}
In this work we shall use the coordinate system $(\bar{\r}, \bar{z})$ on $(\Sigma, q)$ such that $\mathcal{H}^+ \cup \Gamma = \{\bar{\r} =0\}$ but the results extend to other coordinates (cf. Appendices G and H in \cite{GM17} ). Likewise, we shall restrict to $\eta$ and the proof is similar for $\lambda$.  It follows from the definition of $\eta$ and the regularity conditions on the axes $\Gamma$ that 
\begin{subequations} 
\begin{align}
\leftexp{(2)}{\Delta}\, \eta =&\, \ptl_a \left(\bar{\mu}_q q^{ab} \eps_{cb}  \mathfrak{E}^c \right), \quad (\mbo{\Sigma}, q) \\
\eta = &0, \quad \Gamma \\
\eta = &0, \quad \mathcal{H}^+
\end{align}
\end{subequations}
with $\ptl_a \mathfrak{E}^a =0, (\mbo{\Sigma}, q),$
where 
$\displaystyle \leftexp{(2)}{\Delta}= \frac{\ptl^2}{\ptl \bar{r}^2}  + \frac{1}{\bar{r}} \frac{\ptl}{\ptl \bar{r}} + \frac{1}{\bar{r}^2} \frac{\ptl^2}{\ptl \bar{\theta}}, \quad \bar{\r} = \bar{r} \cos{\bar{\theta}},\, \bar{z} = \bar{r} \sin{\bar{\theta}}
.$
It follows from the method of images that the fundamental of solution $K$ of the Laplacian on $\Sigma$ 
\begin{subequations} 
\begin{align}
\leftexp{(2)}{\Delta}\, K =&\, \delta (\bar{r}), \quad \mbo{\Sigma}\\
 K = &0,\quad  \Gamma \\
K =& 0,\quad \mathcal{H}^+
\end{align}
\end{subequations}
is 
\begin{align}
K = \frac{1}{2 \pi} \log \varrho - \frac{1}{2 \pi} \log \varrho'
\end{align}
where $\varrho, \varrho'$ are the (Euclidean) distances from $(\bar{\r}, \bar{z}) \in \mbo{\Sigma}$ and its   `image point' $(-\bar{\r}, \bar{z})$ respectively and $\delta(\bar{r})$ is the Dirac delta function on $\mbo{\Sigma}$ with flat metric. 
$K$ has faster decay rate than the fundamental solution of the Laplacian on $\mathbb{R}^2$. Likewise, we can represent $\lambda$ as follows 
\begin{subequations} 
\begin{align}
\leftexp{(2)}{\Delta}\, \lambda =&\, \ptl_a \left(\bar{\mu}_q q^{ab} \eps_{cb}  \mathfrak{B}^c \right), \quad (\mbo{\Sigma}, q) \\
\lambda = &0, \quad \Gamma \\
\lambda = &0, \quad \mathcal{H}^+.
\end{align}
\end{subequations}

The representation formulas \eqref{eta-rep} and \eqref{lambda-rep} follow immediately.  It may be noted that, in a strict sense, the representation formulas for $\eta$ and $\lambda$ correspond to their definitions only if the Gauss constraint equations are satisfied. In our work, we are only interested in the Maxwell scalars which satisfy the Maxwell equations, so this condition is automatically satisfied. Now, eliminating the variables in $X$ in favour of the Maxwell scalars $\Phi_0, \Phi_1, \Phi_2$, we get a positive-definite energy expression for their dynamics:
 \begin{align}\label{NP-Max-Energy}
&H^{\text{NPM}}(\Phi_0, \Phi_1, \Phi_1) \notag\\
\fdg= \int_{\mbo{\Sigma}} \Bigg \{ & \halb N e^{2\gamma} \bar{\mu}^{-1}_q \Big \{  \Big(2e^{-2\gamma}N \bar{\mu}_{q} \big( \text{Re} ( \Phi_0 m^{*[0} n^{3]}+ \Phi_1 (n^{[0} \ell^{3]} + m^{[0} m^{*3]})\notag\\ 
+&\Phi_2 \ell^{[0} m^{3]}  )\big) \Big) ^2 
 +  \Big(2\text{Re} (\Phi_0 m^{*}_{[1} n_{2]}+ \Phi_1 (n_{[1} \ell_{2]} + m_{[1} m_{2]}^{*}) + \Phi_2 \ell_{[1} m_{2]}  )\Big)^2 \Big \} \notag\\
+&\halb N \bar{\mu}_q q^{ab} e^{-2\gamma} \Big \{ \Big(2 \text{Re}(\Phi_0 m^*_{[a}n_{3]}+  \Phi_1 (n_{[a} \ell_{3]} + m_{[a} m^*_{3]}) + \Phi_2 \ell_{[a} m_{3]} ) \notag\\
-&2s\ptl_a \gamma  \ptl_c \big(2\bar{\mu}_q q^{dc} \textnormal{Re} (\Phi_0 m^*_{[d}n_{3]} + \Phi_1 (n_{[d} \ell_{3]} + m_{[d} m^*_{3]}) +\Phi_2 \ell_{[d} m_{3]})  \big)\star K \notag\\
-&(1-s)e^{-2\gamma}\ptl_a \omega  \ptl_c \big(2Ne^{2\gamma} \bar{\mu}^2_q q^{dc} \eps_{fd} \textnormal{Re} (\Phi_0 m^{*[0} n^{f]} + \Phi_1 (n^{[0} \ell^{f]} + m^{[0}m^{*f]}) \notag\\+& \Phi_2 \ell^{[0}m^{f]}) \big) \star K \Big)\cdot
\Big(2\text{Re}(\Phi_0 m^*_{[b}n_{3]}+ \Phi_1 (n_{[b} \ell_{3]} + m_{[b} m^*_{3]}) + \Phi_2 \ell_{[b} m_{3]} ) \notag\\
-&2s\ptl_b \gamma  \ptl_c \big(2\bar{\mu}_q q^{dc} \textnormal{Re} (\Phi_0 m^*_{[d}n_{3]} + \Phi_1 (n_{[d} \ell_{3]} + m_{[d} m^*_{3]}) +\Phi_2 \ell_{[d} m_{3]})  \big)\star K \notag\\
-&(1-s)e^{-2\gamma}\ptl_b \omega  \ptl_c \big(2Ne^{2\gamma} \bar{\mu}^2_q q^{dc} \eps_{fd} \textnormal{Re} (\Phi_0 m^{*[0} n^{f]} + \Phi_1 (n^{[0} \ell^{f]} + m^{[0}m^{*f]}) \notag\\+& \Phi_2 \ell^{[0}m^{f]}) \big) \star K \Big) 
 + \Big( 2 e^{-2\gamma}N \bar{\mu}_q \eps_{ac} \text{Re}( \Phi_0 m^{[*0} n^{c]} + \Phi_1 (n^{[0}\ell^{c]}+ m^{[0} m^{*c]}) \notag\\ 
 +& \Phi_2 \ell^{[0}m ^{c]} ) 
 -2(1-s)\ptl_a \gamma \ptl_c \big(2Ne^{2\gamma} \bar{\mu}^2_q q^{dc} \eps_{fd} \textnormal{Re} (\Phi_0 m^{*[0} n^{f]} + \Phi_1 (n^{[0} \ell^{f]} + m^{[0}m^{*f]}) \notag\\
 +& \Phi_2 \ell^{[0}m^{f]}) \big) \star K 
 +se^{-2\gamma}\ptl_a \omega \ptl_c \big(\bar{\mu}_q q^{dc} \textnormal{Re} (\Phi_0 m^*_{[d}n_{3]} + \Phi_1 (n_{[d} \ell_{3]} + m_{[d} m^*_{3]}) \notag\\
 +&\Phi_2 \ell_{[d} m_{3]})  \big)\star K \Big) 
 \cdot \Big( 2 e^{-2\gamma}N \bar{\mu}_q \eps_{bc} \text{Re}( \Phi_0 m^{[*0} n^{c]} + \Phi_1 (n^{[0}\ell^{c]}+ m^{[0} m^{*c]}) + \Phi_2 \ell^{[0}m ^{c]} )  \notag\\
  -&2(1-s)\ptl_b \gamma \ptl_c \big(2Ne^{2\gamma} \bar{\mu}^2_q q^{dc} \eps_{fd} \textnormal{Re} (\Phi_0 m^{*[0} n^{f]} + \Phi_1 (n^{[0} \ell^{f]} + m^{[0}m^{*f]}) + \Phi_2 \ell^{[0}m^{f]}) \big) \star K  \notag\\
 +&se^{-2\gamma}\ptl_b \omega \ptl_c \big(\bar{\mu}_q q^{dc} \textnormal{Re} (\Phi_0 m^*_{[d}n_{3]} + \Phi_1 (n_{[d} \ell_{3]} + m_{[d} m^*_{3]})
+\Phi_2 \ell_{[d} m_{3]})  \big)\star K \Big) \Big \} \notag\\
  +&2s(1-s) N \bar{\mu}_q q^{ab} (\ptl_a \gamma \ptl_b \gamma + \frac{1}{4}e^{-4\gamma} \ptl_a \omega \ptl_b \omega) \cdot \notag\\
 \Big(& (\ptl_b \big(2\bar{\mu}_q q^{ab} \textnormal{Re} (\Phi_0 m^*_{[a}n_{3]} + \Phi_1 (n_{[a} \ell_{3]} + m_{a} m^*_{3]})
+\Phi_2 \ell_{[a} m_{3]})  \big)\star K)^2 \notag\\ 
+&(\ptl_b \big(2Ne^{2\gamma} \bar{\mu}^2_q q^{ab} \eps_{ca} \textnormal{Re} (\Phi_0 m^{*[0} n^{c]} + \Phi_1 (n^{[0} \ell^{c]} + m^{[0}m^{*c]}) + \Phi_2 \ell^{[0}m^{c]}) \big) \star K )^2\Big) \Bigg\} d^2x.
\end{align}

\end{proof}
\end{theorem}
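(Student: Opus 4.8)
My strategy is to handle the two assertions in turn, using Part 1 as the analytic input for Part 2. Part 1 inverts the curl-type relations defining the twist potentials into explicit non-local formulas, and Part 2 transports the positive-definite, conserved $1$-parameter Hamiltonian family already built in the Corollary from the twist-potential phase space $X$ into the Maxwell scalars $\Phi_0,\Phi_1,\Phi_2$.

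For Part 1 I would start from $\mathfrak{E}^a=\eps^{ab}\ptl_b\eta$ on $(\mbo{\Sigma},q)$, contract with $\eps_{cb}$ and take the weighted divergence $\ptl_a(\bar{\mu}_q q^{ab}\,\cdot\,)$. By the conformal invariance of the $2$-Laplacian this collapses to the flat operator $\leftexp{(2)}{\Delta}$ in the $(\bar{\r},\bar{z})$ chart, giving the Poisson problem $\leftexp{(2)}{\Delta}\eta=\ptl_a(\bar{\mu}_q q^{ab}\eps_{cb}\mathfrak{E}^c)$ with the homogeneous Dirichlet data $\eta=0$ on $\Gamma$ and on $\mathcal{H}^+$; these boundary values are legitimate because the quotient $(\mbo{\Sigma},q)$ is simply connected, so the Poincar\'e lemma behind \eqref{eta-twist} carries no cohomological obstruction, and because of the chosen normalization together with the regularity of $F$ on the axis and horizon. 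Since $\mathcal{H}^+\cup\Gamma=\{\bar{\r}=0\}$, the method of images furnishes the kernel $K=\tfrac{1}{2\pi}\log\varrho-\tfrac{1}{2\pi}\log\varrho'$, and the maximum principle for the Dirichlet problem gives uniqueness. Convolving the source against $K$ and substituting the tetrad expression for $\mathfrak{E}$ in terms of $\Phi_0,\Phi_1,\Phi_2$ yields \eqref{eta-rep}; repeating the argument with $\mathfrak{B}^a=\eps^{ab}\ptl_b\lambda$ gives \eqref{lambda-rep}.

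For Part 2 I would take the Hamiltonian $H^{\text{Alt}}_S(s)$ (equivalently its regularized form $H^{\text{Reg}}_S(s)$) of the Corollary, which is positive-definite for $s\in[0,1]$ and conserved through the divergence-free density $J^{\text{Reg}}_S(s)$. The remaining work is a change of variables: invert $u=\mathfrak{B}^\phi$, $v=-\mathfrak{E}^\phi$ and the representation formulas \eqref{eta-rep}, \eqref{lambda-rep} so that $u,v,\eta,\lambda$ and their gradients become (non-local) linear functionals of $\Phi_0,\Phi_1,\Phi_2$ through the tetrad contractions tabulated above, then insert these into $H^{\text{Alt}}_S(s)$ term by term to obtain \eqref{NP-Max-Energy}. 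Positive-definiteness is inherited because the substitution is an invertible rewriting of a manifestly nonnegative integrand, and conservation is inherited from $\ptl_\mu(J^{\text{Reg}}_S(s))^\mu=0$ together with the stationarity of Kerr, i.e. the absence of explicit $t$-dependence in $N,\gamma,\omega,q^{ab}$.

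The crux is the non-locality of the map $\{\Phi_0,\Phi_1,\Phi_2\}\mapsto(\eta,\lambda)$ through the convolution with $K$: one must check that this \emph{non-local canonical transformation} is well defined on the relevant data and that it truly reproduces the defining relations of the twist potentials, which --- as noted in the statement --- holds precisely because the Gauss constraint is automatically satisfied once $\Phi_0,\Phi_1,\Phi_2$ solve \eqref{Maxwell-eqn}. The faster spatial decay of $K$ relative to the $\mathbb{R}^2$ Newtonian kernel, produced by the image subtraction, is what makes the integrals in \eqref{NP-Max-Energy} converge for compactly supported data; establishing this decay and the admissibility of the data is the genuinely analytic point, whereas the term-by-term tetrad substitution, though lengthy, is routine.
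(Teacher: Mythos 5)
Your proposal follows essentially the same route as the paper: reduce the twist-potential relations to a flat Dirichlet--Poisson problem on $\{\bar{\r}>0\}$ (the paper's isothermal $(\bar{\r},\bar{z})$ chart makes $\ptl_a(\bar{\mu}_q q^{ab}\ptl_b\,\cdot\,)$ the flat $\leftexp{(2)}{\Delta}$, which is your conformal-invariance remark), invert with the image kernel $K=\tfrac{1}{2\pi}\log\varrho-\tfrac{1}{2\pi}\log\varrho'$, and then substitute the resulting non-local expressions for $(\eta,\lambda,u,v)$ into the $1$-parameter Hamiltonian family $H^{\text{Alt}}_S(s)$ of the Corollary, with conservation and positivity inherited from its divergence-free current $J^{\text{Reg}}_S(s)$ and the stationarity of the background. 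Your explicit appeals to the maximum principle for uniqueness and to the automatic satisfaction of the Gauss constraint match the paper's (partly implicit) reasoning, so the argument is correct and not materially different.
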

In contrast with \eqref{Bel-Rob-Max}, the energy functional \eqref{NP-Max-Energy} is nonlocal in the Maxwell scalars $\Phi_0, \Phi_1, 
\Phi_2.$ If desired, the $\ptl \omega$ terms can be eliminated using \eqref{grav-twist} to obtain a completely 3+1 representation of the energy functional \eqref{NP-Max-Energy}.

\subsection*{Acknowledgements} I express my gratitude to Vincent Moncrief for the enjoyable discussions and the feedback. 

\bibliography{../../References/central-bib.bib}

\begin{thebibliography}{10}

\bibitem{LB_15_1}
L.~Andersson and P.~Blue.
\newblock Hidden symmetries and decay for the wave equation on the {K}err
  spacetime.
\newblock {\em Ann. Math.}, 182(3):787--853, 2015.

\bibitem{LB_15_2}
L.~Andersson and P.~Blue.
\newblock Uniform energy bound and asymptotics for the {M}axwell field on a
  slowly rotating {K}err black hole exterior.
\newblock {\em J. Hyper. Differential Equations}, 12(4):689--743, 2015.

\bibitem{ABW_17}
L.~Andersson, P.~Blue, and J.~Wang.
\newblock Morawetz estimate for linearized gravity on {S}chwarzschild.
\newblock {\em arXiv}, 1708.06943, 2017.

\bibitem{AMPW_16}
L.~Andersson, S.~Ma, C.~Paganini, and B.Whiting.
\newblock Mode stability on the real axis.
\newblock {\em arXiv}, 1607.02759, 2016.

\bibitem{PB_08}
P.~Blue.
\newblock Decay of the {M}axwell field on the {S}chwarzschild manifold.
\newblock {\em J. Hyper. Differential Equations}, 5(4):807--856, 2008.

\bibitem{Chandrasekhar_83}
S.~Chandrasekhar.
\newblock The mathematical theory of black holes.
\newblock {\em Oxford university press}, 1983.

\bibitem{DC_70}
D.~Christodoulou.
\newblock Reversible and irreversible transformations in black-hole physics.
\newblock {\em Phys. Rev. Lett}, 25(22):1596--1597, 1970.

\bibitem{HDR_16}
M.~Dafermos, G.~Holzegel, and I.~Rodnianski.
\newblock The linear stability of the {S}chwarzschild solution to gravitational
  perturbations.
\newblock {\em arXiv}, 1601.06467, 2016.

\bibitem{HDR_17}
M.~Dafermos, G.~Holzegel, and I.~Rodnianski.
\newblock Boundedness and decay for the {T}eukolsky equation on {K}err
  spacetimes i: the case $|a| \ll m$.
\newblock {\em arXiv}, 1711.07944, 2017.

\bibitem{DR_11}
M.~Dafermos and I.~Rodnianski.
\newblock A proof of the uniform boundedness of solutions to the wave equation
  on slowly rotating {K}err backgrounds.
\newblock {\em Invent. Math.}, 185(3):467--559, 2011.

\bibitem{DRS_16}
M.~Dafermos, I.~Rodnianski, and Y.~Shlapentokh-Rothman.
\newblock Decay for solutions of the wave equation on {K}err exterior
  spacetimes iii: The full sub-extremal case $|a|<m$.
\newblock {\em Ann. Math.}, 183(3):787--913, 2016.

\bibitem{D09}
S.~Dain.
\newblock Axisymmetric evolution of {E}instein equations and mass conservation.
\newblock {\em Class. Quantum Grav.}, 25(14):145021 (18pp), 2009.

\bibitem{DA_14}
S.~Dain and I.~Gentile de~Austria.
\newblock On the linear stability of the extreme {K}err black hole under
  axially symmetric perturbations.
\newblock {\em Class. Quantum. Grav.}, 31(19):195009--, 2014.

\bibitem{FKSY_05}
F.~Finster, N.~Kamran, J.~Smoller, and S.-T. Yau.
\newblock An integral spectral representation of the propagator for the wave
  equation in the {K}err geometry.
\newblock {\em Comm. Math. Phys.}, 260(2):257--298, 2005.

\bibitem{FKSY_06}
F.~Finster, N.~Kamran, J.~Smoller, and S.-T. Yau.
\newblock Decay of solutions of the wave equation in the {K}err geometry.
\newblock {\em Comm. Math. Phys.}, 264(2):465--503, 2006.

\bibitem{FKSY_08_E}
F.~Finster, N.~Kamran, J.~Smoller, and S.-T. Yau.
\newblock Decay of solutions of the wave equation in the {K}err geometry;
  erratum to {C}omm. {M}ath. {P}hys. 264(2): 465-503.
\newblock {\em Comm. Math. Phys.}, 280(2):563--573, 2008.

\bibitem{FKSY_08}
F.~Finster, N.~Kamran, J.~Smoller, and S.-T. Yau.
\newblock A rigorous treatment of energy extraction from a rotating black hole.
\newblock {\em Comm. Math. Phys.}, 287(3):829--847, 2008.

\bibitem{NG_17_2_ar}
N.~Gudapati.
\newblock A positive-definite energy functional for axially symmetric
  {M}axwell's equations on {K}err-de {S}itter black hole spacetimes.
\newblock {\em arXiv}, 1710.11294, 2017.

\bibitem{WH_13}
S.~Hollands and R.~Wald.
\newblock Stability of black holes and black branes.
\newblock {\em Comm. Math. Phys.}, 321(3):629--680, 2013.

\bibitem{GH_16}
G.~Holzegel.
\newblock Conservation laws and flux bounds for gravitational perturbations of
  the {S}chwarzschild metric.
\newblock {\em Class. Quantum Grav}, 33(20), 2016.

\bibitem{HKW_16_1}
P.-K. Hung, J.~Keller, and M.-T. Wang.
\newblock Linear stability of {S}chwarzschild spacetime subject to axially
  symmetric perturbations.
\newblock {\em arXiv}, 1610.08547, 2017.

\bibitem{HKW_16_2}
P.-K. Hung, J.~Keller, and M.-T. Wang.
\newblock Linear stability of {S}chwarzschild spacetime: The {C}auchy problem
  of metric coefficients.
\newblock {\em arXiv}, 1702.02843, 2017.

\bibitem{SMa_17_1}
S.~Ma.
\newblock Uniform energy bound and {M}orawetz estimate for extreme components
  of spin fields in the exterior of a slowly rotating {K}err black hole i:
  {M}axwell field.
\newblock {\em arXiv}, 1705.06621, 2017.

\bibitem{SMa_17_2}
S.~Ma.
\newblock Uniform energy bound and {M}orawetz estimate for extreme components
  of spin fields in the exterior of a slowly rotating {K}err black hole ii:
  linearized gravity.
\newblock {\em arXiv}, 1708.07385, 2017.

\bibitem{Moncrief_74_3}
V.~Moncrief.
\newblock Gauge invariant perturbations of {R}eissner-{N}ordstrom black holes.
\newblock {\em Phys. Rev. D.}, 12:1526--1537, 1974.

\bibitem{Moncrief_74}
V.~Moncrief.
\newblock Gravitational perturbations of spherically symmetric systems. {I}.
  the exterior problem.
\newblock {\em Ann. Phys.}, 88(2):323--342, 1974.

\bibitem{Moncrief_74_1}
V.~Moncrief.
\newblock Odd-parity stability of a {R}eissner-{N}ordstrom black hole.
\newblock {\em Phys. Rev. D.}, 9:2707--2709, 1974.

\bibitem{Moncrief_74_2}
V.~Moncrief.
\newblock Stability of {R}eissner-{N}ordstrom black holes.
\newblock {\em Phys. Rev. D.}, 10:1057--1059, 1974.

\bibitem{GM17}
V.~Moncrief and N.~Gudapati.
\newblock On axisymmetric {E}instein-{M}axwell perturbations of {K}err-{N}ewman
  black hole spacetimes.
\newblock {\em in preparation (title tentative)}.

\bibitem{PW_17_2}
K.~Prabhu and R.~Wald.
\newblock unpublished (private communication).

\bibitem{PW_17}
K.~Prabhu and R.~Wald.
\newblock Stability of stationary-axisymmetric black holes in vacuum general
  relativity to axisymmetric electromagnetic perturbations.
\newblock {\em arXiv}, 1708.03248, 2017.

\bibitem{Regge-Wheeler_57}
T.~Regge and J.A. Wheeler.
\newblock Stability of a {S}chwarzschild singularity.
\newblock {\em Phys. Rev.}, 108(4):1063--1069, 1957.

\bibitem{Rob_74}
D.~C. Robinson.
\newblock Classification of black holes with electromagnetic fields.
\newblock {\em Phys. Rev. D}, 10(2):458--460, 1974.

\bibitem{AB_17}
S.Aksteiner and T.~Backdahl.
\newblock Symmetries of linearized gravity from adjoint operators.
\newblock {\em arXiv}, 1609.04584, 2017.

\bibitem{AB_18}
S.Aksteiner and T.~Backdahl.
\newblock All local gauge invariants for perturbations of the {K}err spacetime.
\newblock {\em arXiv}, 1803.05341, 2018.

\bibitem{AAS_73}
A.A. Starobinsky.
\newblock Amplification of waves during reflection from a black hole.
\newblock {\em Soviet Physics JETP}, 37:38--32, 1973.

\bibitem{Tato_11}
D.~Tataru and M.~Tohaneanu.
\newblock A local energy estimate on {K}err black hole backgrounds.
\newblock {\em Int. Math. Res. Notices}, 2:248--292, 2011.

\bibitem{Teukolsky_73}
S.A. Teukolsky.
\newblock Perturbations of a rotating black hole. {I}. {F}undamental equations
  for gravitational, electromagnetic, and neutrino-field perturbations.
\newblock {\em Astrophys. J.}, 185:635--647, 1973.

\bibitem{W_78}
R.~Wald.
\newblock Construction of solutions of gravitational, electromagnetic, or other
  perturbation equations from solutions of decoupled equations.
\newblock {\em Phys. Rev. Lett}, 41:203--206, 1978.

\bibitem{Whit_89}
B.~Whiting.
\newblock Mode stability of the {K}err black hole.
\newblock {\em J. Math. Phys.}, 30(6):1301--1305, 1989.

\bibitem{Zerilli_70}
F.J. Zerilli.
\newblock Effective potential for even parity {R}egge-{W}heeler gravitational
  perturbation equations.
\newblock {\em Phys. Rev. Lett}, 24(13):737--738, 1970.

\end{thebibliography}
\bibliographystyle{plain}

\end{document}